\newcommand{\E}{\mathcal{E}}
\newcommand{\V}{\mathcal{V}}
\newcommand{\W}{\mathcal{W}}
\newtheorem{lemma}{Lemma}
\newtheorem{corollary}{Corollary}
\tikzset{
node distance=2cm, 
every state/.style={ fill=gray!10}, 
initial text=$ $ 
}
\newcommand{\CC}{C\nolinebreak\hspace{-.05em}\raisebox{.4ex}{\tiny\bf +}\nolinebreak\hspace{-.10em}\raisebox{.4ex}{\tiny\bf +}}
\theoremstyle{thmstyleone}%
\newtheorem{theorem}{Theorem}
\newtheorem{proposition}[theorem]{Proposition}%
\theoremstyle{thmstyletwo}%
\newtheorem{example}{Example}%
\newtheorem{remark}{Remark}%
\theoremstyle{thmstylethree}%
\newtheorem{definition}{Definition}%
\begin{document}

\title[Temporal Betweenness Centrality on Shortest Walks Variants]{Temporal Betweenness Centrality on Shortest Walks Variants}


\author{\fnm{Mehdi} \sur{Naima}}\email{mehdi.naima@lip6.fr}

\affil{ \orgname{Sorbonne Université, CNRS, LIP6, F-75005 Paris, France}}


\abstract{Betweenness centrality has been extensively studied since its introduction in 1977 as a measure of node importance in graphs. This measure has found use in various applications and has been extended to temporal graphs with time-labeled edges. Recent research by Bu{\ss} et al. \cite{buss2020algorithmic} and Rymar et al. \cite{rymar2021towards} has shown that it is possible to compute the shortest walks betweenness centrality of all nodes in a temporal graph in $O(n^3\,T^2)$ and $O(n^2\,m\,T^2)$ time, respectively, where $T$ is the maximum time, $m$ is the number of temporal edges, and $n$ is the number of nodes. These approaches considered walks that do not take into account contributions from intermediate temporal nodes.

In this paper, we study the temporal betweenness centrality on classical walks that we call \textit{passive}, as well as on a variant that we call \textit{active} walks, which takes into account contributions from all temporal nodes. We present an improved analysis of the running time of the classical algorithm for computing betweenness centrality of all nodes, reducing the time complexity to $O(n\,m\,T+ n^2\,T)$. Furthermore, for active walks, we show that the betweenness centrality can be computed in $O(n\,m\,T+ n^2\,T^2)$. We also show that our results hold for different shortest walks variants.

Finally, we provide an open-source implementation of our algorithms and conduct experiments on several real-world datasets of cities and contact traces. We compare the results of the two variants on both the node and time dimensions of the temporal graph, and we also compare the temporal betweenness centrality to its static counterpart. Our experiments suggest that for the shortest foremost variant looking only at the first $10\%$ of the temporal interaction is a good approximation for the overall top ranked nodes.}

\keywords{Graph algorithms, Experimental algorithms, Betweenness Centrality, Temporal Graphs, Shortest Paths, Time centrality, restless walks}



\maketitle

\section{Introduction}\label{sec:intro_mon}
Betweenness centrality is a well-known centrality measure in static graphs that aims to identify central nodes in a graph. Centrality measures assign a value to each node (or edge) in based to its importance (centrality). In a static graph, the betweenness centrality of a node is based on the number of shortest paths passing through that node. It was introduced by Freeman in \cite{freeman}. This centrality has been studied extensively in the literature and is a classical measure in network analysis used in a variety of domains such as social networks \cite{burt2004structural}, transports \cite{puzis2013augmented}, biology \cite{narayanan2005betweenness,yoon2006algorithm} and scientific collaboration networks \cite{leyd}. Additionally, betweenness centrality has been utilized as an efficient method for graph partitioning and community detection \cite{girvan2002community}. Brandes in \cite{brandes} introduced a method for computing betweenness centrality of a whole graph in $O(n\,m + n^2)$ which remains the fastest known algorithm. 

Recently, betweenness centrality has been extended to dynamic graph formalisms such as temporal graphs \cite{temporal} and stream graphs \cite{latapy2018stream}. The generalization of betweenness centrality to a temporal setting is not unique, and many optimality criteria have been considered in the literature \cite{buss2020algorithmic,rymar2021towards,sfp,tang,kim2012temporal,latapy2018stream}, including shortest walks, fastest walks, foremost walks, and shortest fastest walks. However, for this paper, we only focus on the shortest walks (minimal number of hops) criteria which has also been studied in \cite{buss2020algorithmic,rymar2021towards} as it is the most straightforward generalization of the static case. It is then possible to define the betweenness centrality of a node $v$ at time $t$ by:
\[ B(v,t) = \sum\limits_{s \neq v \neq z \in V} \dfrac{\sigma_{sz}(v,t)}{\sigma_{sz}}, \]
where $\frac{\sigma_{sz}(v,t)}{\sigma_{sz}}$ is the fraction of shortest temporal walks from $s$ to $z$ that pass through node $v$ at time $t$. Recent results on temporal betweenness centrality, tried with success to adapt Brandes algorithm to the temporal setting \cite{buss2020algorithmic,rymar2021towards}. For shortest walks their approach lead to time complexities of $O(n^3\,T^2)$ and $O(n^2\,m\,T^2)$ to compute the betweenness of a whole temporal graph. However, their algorithms considered only what we call \textit{passive} temporal walks in which the walk only exists when it arrives at a certain temporal node, and moreover, they did not apply Brandes algorithm to its full extent as we shall see. 

In a temporal walk, when there is a delay between steps like starting from a node
$u$, the walk transitions to node $u_1$ at time $t_1$, then later
transitions to node $u_2$ at time $t_2$. Many existing works consider that such a
walk contributes to the betweenness of $u_1$ only at time $t_1$, while we
investigate the more general and more natural version in which the walk
contributes to the betweenness of $u_1$ for all times between $t_1$ and $t_2$.
Indeed, removing $u_1$ at any of these times makes the walk unfeasible. To this end, we consider both what we call \textit{passive} and \textit{active} shortest walks, so that active walks exist all along a node until leaving it while passive walks correspond to the more classical version. For the classical \textbf{passive shortest walks}, we improve the time analysis of \cite{buss2020algorithmic} and show that the Betweenness centrality of the whole graph can be computed in $O(n\,m\,T + n^2\,T)$. This bound increases to $O(n\,m\,T + n^2\,T^2)$ if considering \textbf{active shortest walks}. We also show that these bounds are still true for \textbf{shortest $k$-restless walks} where it is not possible to stay more than $k$ time units on the same node and for \textbf{shortest foremost walks} where we want to reach a node as soon as possible. For all stated criteria the results also hold on their \textit{strict} versions where traversing a node takes one time unit. Our time analysis results show that we can use Brandes approach to its full extent in the temporal setting since when the temporal graph is static (i.e its edges exist at only one timestamp) our analysis reduces to the state of art algorithm on static graphs \cite{brandes}. In fact active walks were considered in \cite{latapy2018stream,tang} but not on shortest walks (number of transitions) and we also seek to have a \textit{systematic} study of all these shortest walks variants as was the case in \cite{buss2020algorithmic,rymar2021towards} by designing a single algorithm for all these variants which was not the aim of these works.

We also provide an open-source implementation in C++ and use it to assess the differences between active and passive variants on real-world temporal graphs in both their \textit{node} and \textit{time} dimensions. On the \textit{node dimension} of the temporal graph we compare temporal betweenness centrality to the static betweenness centrality computed on the aggregated graph. Our experiments show that the temporal and static betweenness centrality rankings of nodes are close to each others with the static betweenness running $100$ times faster. On the \textit{time dimension} our experiments show that the active variant that we propose gives more importance to central times in contrast with the passive classical variant where first the times of the graph are the most important. Finally, our experiments suggest that for the shortest foremost variant looking only at the first $10\%$ of the temporal interactions is a good approximation for the overall top ranked nodes.

The paper is organized as follows, in Section~\ref{sec:form} we introduce our formalism that is a modified version of \cite{rymar2021towards}. We start by defining active and passive walks and giving a motivation for the study of active walks. We end this section by defining the betweenness centrality of a temporal node. In Section~\ref{sec:res} we give the statement of our main Theorem~\ref{thm:gen} followed by a discussion of the results. After that in Section~\ref{sec:proofs} we give the main ideas and algorithms to prove our results with more details given in Section~\ref{sec:appendix}. Finally, Section~\ref{sec:experiments} presents our experimental results. We mainly focus on the differences of behaviours between active and passive walks and show that the rankings of temporal nodes are moderately correlated on real-world datasets. We end this paper with some perspectives in Section~\ref{sec:conclu}.

\section{Formalism}\label{sec:form}
We use a formalism close to the ones used in \cite{buss2020algorithmic,rymar2021towards}. We define a directed temporal graph $G$
as a triple $G = (V, \E, T)$ such that $V$ is the set of vertices, $T \in \mathbb{N}$, is the maximal time step with $[T] := \{1,\dots, T \}$ and $\E \subseteq V \times V \times [T]$ is the set of temporal arcs (transitions). We denote by  $n := |V|$ and $m := |\E|$. We call $V \times [T]$ the set of temporal nodes. Then $(v, w, t) \in \E$ represents a temporal arc from $v$ to $w$ at time $t$. 
\begin{definition}[Temporal walk]
  Given a temporal graph $G = (V, \E, T)$, a temporal walk $W$ is a sequence of transitions $e \in \E^k$ with $k \in \mathbb{N}$, where $e = (e_1,\dots,e_k)$, with $e_i = (u_i,v_i,t_i)$ such that for each $ 1 \leq i \leq k-1$ of $u_{i+1}=v_i$  and $t_i \leq t_{i+1}$.
\end{definition}

The \textbf{length} of a temporal walk $W$ denoted $len(W)$ is its number of transitions. We also denote by $arr(W)$ the time of the last transition of $W$. We can associate a type of walks to consider on a temporal graph. We will study in this paper two types of walks on temporal graphs that are called \textbf{active} and \textbf{passive} walks. We will denote the type of walks considered on a temporal graph $G$ by
$$\mathrm{type}(G) \in \{act,pas \},$$
where $pas$ stands for passive and $act$ for active. The important difference between active and passive walks is that, a passive walk only exists on node transitions, therefore a passive walk that arrive to $v$ at time $t$ and leaves $v$ at $t'$ only exists on node $v$ for a single time $t$, while an active walk exists on $v$ for all times $ t \leq i \leq t'$. This difference is formally defined in the following definition.
\begin{figure}
    \centering
        \includegraphics[scale=0.4]{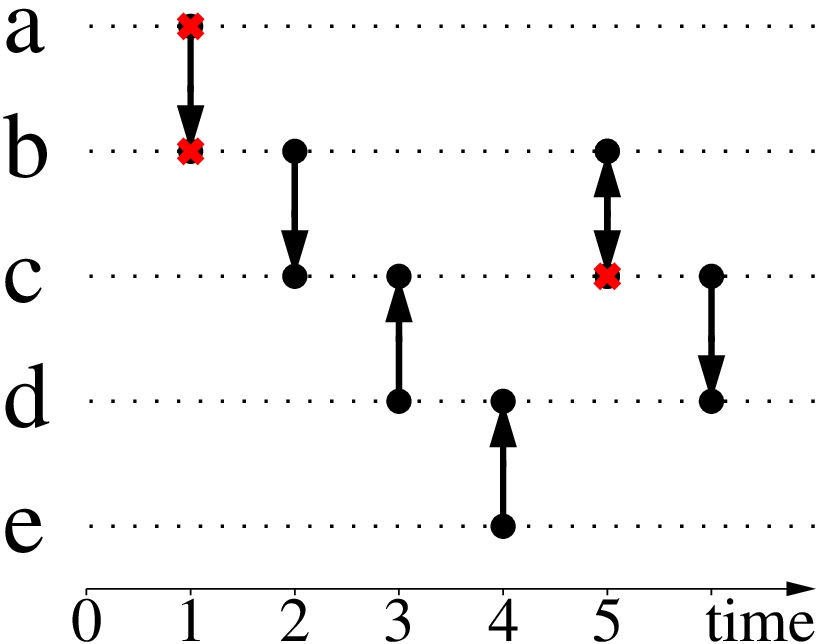}
    \includegraphics[scale=0.4]{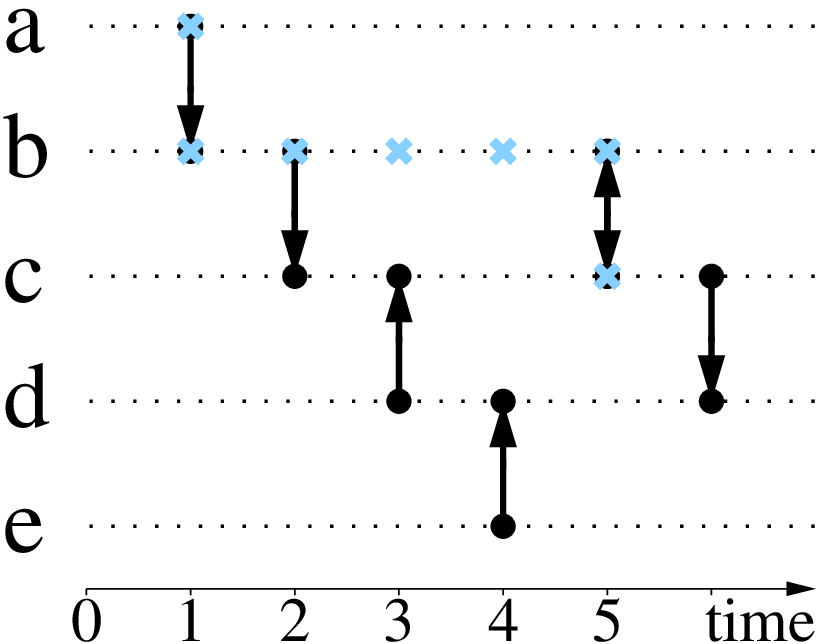}
    \caption{A temporal graph having nodes $V = \{a,b,c,d\}$ and $T = 7$ with arrows representing the set $\E$. The walk $W = [(a,b,1), (b,c,5)]$ can be denoted as $W = a \overset{1}{\rightarrow}  b
\overset{5}{\rightarrow} c$. (left) Walks are passive then $\V(W) = [(a,1),(b,1),(c,5) ]$ marked in red. (right) Walks are active then $\V(W) = [ (a,1),(b,1),(b,2),(b,3),(b,4),(b,5),(c,5)]$ marked in blue.}
    \label{fig:ex}
\end{figure}

\begin{definition}[Visited temporal nodes]\label{def:vis}
 For a temporal graph $G$, fix a walk type. Let $W$ be a temporal walk such that $len(W) = k$ and let $k>0$. Then the list of visited temporal nodes $\V(W)$ is given by:
\begin{equation}
\V(W) =
\begin{cases}
 [ (u_1,t_1)] + [ (v_i,t_i) \,|\,  1 \leq i \leq k ],  & \text{if $\mathrm{type}(G) = pas$ }\\
[(u_1,t_1)] + \left(\biguplus_{i = 1}^{k-1} [(v_i,t) \,|\,t_i \leq t \leq t_{i+1} ]\right)+[ (v_k, t_k)] & \text{otherwise}
\end{cases}    
\end{equation}
where $+$ denotes list concatenation and $\biguplus$ is used for concatenation of several lists.
\end{definition}
We will denote by $W[-1]$ the last visited temporal node corresponding to the last element in $\V(W)$. Let $G = (V,\E,T)$ be a temporal graph with $\mathrm{type}(G) = active$. 
We can denote a temporal walk using an arrow notation. For instance $W = [(a,b,1), (b,c,5)]$ of the temporal graph in Figure~\ref{fig:ex} by $W =  a \overset{1}{\rightarrow}  b \overset{5}{\rightarrow} c$. See Figure~\ref{fig:ex} for an example of these concepts. This distinction between active and passive walks is important since the authors of recent results in this line of research \cite{buss2020algorithmic,rymar2021towards} consider only passive walks.
A temporal walk is called a \textbf{path} if each node $v \in V$ in the list of visited nodes appears exactly once. Moreover, a temporal walk $W$ is a \textbf{strict} temporal walk if for each transition time label is strictly larger than the previous one, that is  for $ 2 \leq i \leq k, t_i > t_{i-1}$. Otherwise, the temporal walk is a \textbf{non-strict} temporal walk. Finally, for $k \in \mathbb{N}$, a temporal walk is \textbf{$k$-restless} if the difference between two consecutive transitions time stamps $t_{i} - t_{i-1} \leq k$.

\begin{example}[Motivation example for the study of active walks]
    Consider a message passing temporal network $G = (V, \E, T)$ where $V$ is a set of machines (computers or routers) and a temporal arc $(u,v,t) \in \E$ corresponds to a message sent from $u$ to $v$ at time $t$. Suppose that Figure~\ref{fig:ex} represents this graph. Take for instance node $c$ at time $4$. At this time node $c$ is retaining a message that arrived from $b$ at time $2$ and another one that arrived from $d$ at time $3$. Therefore, node $c$ at time $4$ is retaining important information. It is important to make sure that machine $c$ is not disconnected from the network at this time to carry this information to other nodes. However, considering the classical passive of temporal walks in \cite{burt2004structural,rymar2021towards} $(c,4)$ is not visited by any temporal walk in the graph and get $0$ value for its betweenness centrality. On the other hand this information is entirely captured by active variant that we propose. 
\end{example}


A walk $W$ is an $s-v$ walk if $W$ starts in node $s$ and ends in node $v$, we denote by $W_{sv}$ the set of all $s-v$ walks. 
In this paper we consider $3$ variants of shortest temporal walks. Shortest here refer to minimizing the number of transitions (length) of the temporal walk. These variants are:
\begin{itemize}
    \item Shortest walks (sh) which minimize the walk length over all walks going from a node to another,
    \item Shortest $k$-restless walks(sh-$k$) which minimize the walk length over all $k$-restless walks going from a node to another and
    \item Shortest foremost (sh-fm) walks which minimize the walk length over all walks going from one node and arriving the earliest in time to the other.
\end{itemize}

For each one of these criteria. We need to define the shortest possible length of temporal walks from a node to another.
\begin{align}
    c_s^{sh}(v) &= \min_{W \in W_{sv}}(len(W)) & \text{(shortest)} \label{eq:cost1} \\
    c_s^{sh-k}(v) &= \min_{ \substack{W \in W_{sv}, \\ W \text{ is } k-restless} }(len(W)) & \text{(shortest $k$-restless)} \label{eq:cost2}\\
    c_s^{fm}(v) &= \min_{W \in W_{sv}}(arr(W)\cdot n + len(W)) & \text{(shortest foremost)} \label{eq:cost3}
\end{align}
Note that the definition of $c_s^{fm}(v)$ ensures only considering walks arriving first and then minimizing over their lengths. 


\begin{remark}
\textbf{Shortest walks} are necessarily paths while this is not true in general for \textbf{shortest $k$-restless walks}. In fact, finding a $k$-restless path has been shown to be NP-hard in \cite{restless}. As a consequence we will use the term walks in general because we want to encompass all variants. 
\end{remark}
For an active temporal walk $W$ we will denote by $W_t$ with $t>arr(W)$ the extension of $W$ on its last node to $t$. Formally, $\mathcal{V}(W_t) = \mathcal{V}(W) + [(v_k,t') \, | \, arr(W) < t' \leq t]$ where $v_k$ is the arrival node of $W$. For example, on the graph of Figure~\ref{fig:ex}, $W = a \overset{1}{\rightarrow}  b \overset{5}{\rightarrow} c$. Then $\mathcal{V}(W_7) = [ (a,1),(b,1),(b,2),(b,3),(b,4),(b,5),(c,5), (c,6), (c,7)]$. Now we can define:
\begin{definition}[Set of shortest walks]  \label{def:opt_walk}
  Let $G = (V, \E, T)$ be a temporal graph and fix a cost. Then
  \begin{equation*}
  \W^{\star} =
  \begin{cases}
  \bigcup_{s,z \in V, s\neq z} \{ W | W \in W_{sz},len(W) = c^{\star}_s(z) ) \}, & \text{ if $type(G) = pas$}    \\
   \bigcup_{s,z \in V, s\neq z} \{ W_T | W \in W_{sz},len(W) = c^{\star}_s(z) ) \}, & \text{ otherwise.}    
  \end{cases} 
  \end{equation*}
  where $\star \in \{ sh,sh-k,fm\}$.
\end{definition}
The reason for the extension of the walks to the last time will be made clear in Section~\ref{sec:appendix}.
\begin{remark}\label{rem:equiv}
    If we allow $k = \infty$ in the $k$-restless setting, then $c_s^{sh}(v) = c_s^{sh-\infty}(v)$ and $\W^{sh} = \W^{sh-\infty}$ since the shortest walk criteria allows all walks regardless of difference in transition time between edges. Therefore, we will only focus on showing our results on $k$-restless criteria for $k \in \mathbb{N} \cup \{ \infty\}$.
\end{remark}

We see that $\W$ is the set of shortest walks between any pair of nodes, it keeps only walks with an overall shortest value. 

\begin{definition}
  Let $G = (V, \E, T )$ be a temporal graph. Fix a walk type, a cost and let $s, v, z \in V$ and $t  \in [T ]$. Let $\W^{\star}$ be as in Definition~\ref{def:opt_walk}. Then,
  \begin{itemize}
  \item $\sigma_{sz}$ is the number of $s-z$ walks in $\W^{\star}$.
  \item $\sigma_{sz}(v,t)$ is the number of $s-z$ walks $W \in \W^{\star}$, such that $W$ passes through $(v,t)$ that is $(v,t) \in \V(W)$ according to Definition~\ref{def:vis}.
  \end{itemize}
\end{definition}
We note that $\sigma_{sz}$ depends only on the cost considered while $\sigma_{sz}(v,t)$ depends on both the cost and the walk type considered.
\begin{definition}
Given a temporal graph $G = (V, \E, T )$, a walk type and a cost. We define
  \begin{align*}
    {\delta}_{sz}(v,t) = \begin{cases}
      0 &\text{ if $\sigma_{s z}=0$ ,}\\
      \dfrac{\sigma_{sz}(v,t)}{\sigma_{sz}} &\text{ otherwise.}
    \end{cases} & &
                    \delta_{s\bullet}(v,t) = \sum_{z \in V}\delta_{sz}(v,t).
  \end{align*}
\end{definition}

\begin{definition}[Betweenness centrality of a temporal node]\label{bet:1}
Given a temporal graph $G = (V, \E, T )$ and a walk type. The betweenness centrality of node $v$ at time $t$ is:
  \begin{equation}\label{eq:bet}
      B(v,t) = \sum_{\substack{s,v,z \in V,\\ s\neq v \neq z } } {\delta}_{sz}(v,t).
  \end{equation}
\end{definition}
According to our definitions there are $2$ walk types and $3$ costs considered. Therefore, we have $6$ different variants that can be considered corresponding to any combination of walk type and cost. Now, from the preceding we define 
\[\hat{B}(v,t) = \sum_{s,z \in V} {\delta}_{sz}(v,t) \implies \quad \hat{B}(v,t) = \sum_{s \in V} {\delta}_{s \bullet}(v,t).\] 
The quantities $B(v,t)$ and $\hat{B}(v,t)$ are related through:
\begin{equation}\label{eq:hat_b}
    \begin{split}
    B(v,t) &= \hat{B}(v,t) - \sum_{w \in V}(\delta_{vw}(v,t) +
  \delta_{wv}(v,t)) = \hat{B}(v,t) - \delta_{v \bullet}(v,t) -  \sum_{w \in V}\delta_{wv}(v,t)  
    \end{split}
\end{equation}
For instance on Figure~\ref{fig:ex}, considering \textbf{passive} shortest walks (approach used in \cite{buss2020algorithmic,rymar2021towards}), $B(b,1) = 2$, and  $\forall t > 1, B(b,t) = 0$ while if we consider \textbf{active} shortest walks we have $B(b,1) = B(b,2)= 2, B(b,3) = B(b,4) = 1$ showing that the \textbf{active} version takes into account contributions from intermediate temporal nodes in shortest paths while it is not true for the \textbf{passive} version.

From the betwenness centrality of a temporal node we can get an overall betweenness centrality of a node $v$ and an overall betweenness centrality of a time.
\begin{definition}[Overall betweenness of a node and overall betweenness of a time]\label{def:time_node}
\begin{equation}\label{eq:overall}
 B(v) = \sum_{t \in [T]} B(v,t), \quad B(t) = \sum_{v \in V} B(v,t).  
\end{equation}    
\end{definition}
\section{Results}\label{sec:res}
While the authors of \cite{buss2020algorithmic,rymar2021towards} focus on computing $B(v)$ for all $v \in V$, we focus on the computation of $B(v,t)$.
Our main result is the following:
\begin{theorem}\label{thm:gen}
  Let $G = (V,\E,T)$ be a temporal graph. For passive walks, the betweenness centrality of all temporal nodes can be computed in $O(n^2\,T\,+ n\,m\,T)$ considering shortest, shortest $k$-restless and shortest foremost walks. For active walks, the betweenness centrality of all temporal nodes can be computed in $O(n^2\,T^2\,+ n\,m\,T)$ considering shortest and shortest $k$-restless walks. Both results hold for strict and non-strict versions.
\end{theorem}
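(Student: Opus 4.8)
The plan is to lift Brandes' accumulation technique to the temporal setting by working on the graph whose vertices are the temporal nodes $V \times [T]$. Fixing a source $s \in V$, I would compute the contribution $\delta_{s\bullet}(v,t)$ of $s$ to every temporal node $(v,t)$ in two sweeps, sum over the $n$ sources to obtain $\hat{B}(v,t) = \sum_{s \in V} \delta_{s\bullet}(v,t)$, and finally recover $B(v,t)$ through the correction \eqref{eq:hat_b}. The core of the argument is to replay Brandes' two observations on temporal nodes: (i) the numbers of shortest walks obey a forward recurrence along a shortest-walk DAG, and (ii) the dependencies obey a backward recurrence on the same DAG.

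\textbf{Forward sweep.} For a fixed source $s$ and cost $\star \in \{sh, sh\text{-}k, fm\}$, I would first compute for every temporal node $(v,t)$ the length of a shortest $s$-walk reaching $v$ at time $t$ together with the count $\sigma_s(v,t)$ of such walks. Processing temporal nodes in nondecreasing order of walk length (a temporal BFS), the count of $(v,t)$ is the sum of the counts of its predecessors, where the predecessor relation is dictated by the cost and type: a transition $(u,v,t) \in \E$ feeds into $(v,t)$ the walks that have reached $u$ at any admissible earlier time (bounded by $k$ for the restless variant, restricted to first-arrival walks for the foremost variant via the $arr(W)\cdot n + len(W)$ encoding of \eqref{eq:cost3}). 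Initializing the $nT$ temporal nodes costs $O(nT)$, and relaxing each of the $m$ transitions against its (up to $T$) admissible arrival times costs $O(T)$ per transition, so the sweep costs $O(nT + mT)$ per source.

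\textbf{Backward sweep.} I would then accumulate dependencies in reverse BFS order. The temporal analogue of Brandes' recurrence expresses $\delta_{s\bullet}(v,t)$ as a sum over the successors $(w,t')$ of $(v,t)$ in the shortest-walk DAG, each weighted by the count ratio $\sigma_s(v,t)/\sigma_s(w,t')$ and carrying a unit term that accounts for $w$ being a destination in its own right (distributed over the terminal temporal nodes of $w$ in proportion to $\sigma_s(w,\cdot)/\sigma_{sw}$). This recurrence is exactly where the passive/active distinction enters: for passive walks a walk contributes to an intermediate node only at its single transition time, whereas for active walks it contributes to every time in the waiting interval (Definition~\ref{def:vis}), so dependency must be \emph{spread} across waiting intervals of length up to $T$. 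Spreading and collecting over such intervals raises the per-source cost from $O(nT + mT)$ to $O(nT^2 + mT)$, yielding the two claimed totals $O(n^2T + nmT)$ and $O(n^2T^2 + nmT)$ after multiplying by the $n$ sources.

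\textbf{Uniform handling and main obstacle.} The equivalence $\W^{sh} = \W^{sh\text{-}\infty}$ of Remark~\ref{rem:equiv} lets me treat shortest and $k$-restless walks with a single parameterized predecessor test, and the strict versions follow by replacing $t_i \le t_{i+1}$ with $t_i < t_{i+1}$ in that test, leaving the complexity unchanged. I expect the main difficulty to be proving correctness of the backward recurrence for active walks: one must show that summing the spread dependencies reproduces $\sum_z \delta_{sz}(v,t)$ \emph{exactly}, i.e. that extending every optimal walk to time $T$ as in Definition~\ref{def:opt_walk} and charging occupancy over each waiting interval neither double-counts nor omits shortest walks through $(v,t)$. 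Tied to this is the subtlety that a destination $z$ may admit several shortest walks terminating at distinct temporal nodes $(z,t)$ of equal length, so the unit destination contribution must be aggregated over arrival times through $\sigma_{sz} = \sum_t \sigma_s(z,t)$. Verifying this identity, and that the forward bookkeeping stays valid once intervals of occupancy rather than single transition times are tracked, is the crux of the proof.
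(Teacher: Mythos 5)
Your plan matches the paper's proof essentially step for step: a per-source forward temporal BFS computing exact-arrival shortest-walk counts in $O(nT+mT)$ (Algorithm~\ref{algo:BFS}, Proposition~\ref{prop:com_pred}, Corollary~\ref{cor:quan}), a backward Brandes-style accumulation over successors in the predecessor DAG (Proposition~\ref{prop:rec_contri}), the correction formula \eqref{eq:hat_b}, the same uniform treatment of shortest walks as $k=\infty$ restless, strict variants via the inequality change, and the same $O(nT^2+mT)$ per-source overhead for active walks. The crux you correctly flag but defer --- that spreading dependencies over waiting intervals neither double-counts nor omits walks --- is exactly what the paper settles through the path/walk correspondence (Lemma~\ref{lem:bij}), the relation between exact and extended counts (Proposition~\ref{prop:sig}), and the $before_{G_s}(v,t)$ device with successors processed in decreasing time order (Equation~\eqref{eq:rec_act_inter}).
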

\begin{proof}
We leave the proof of Theorem~\ref{thm:gen} to the end of Section~\ref{sec:proofs}.
\end{proof}
\begin{table}[h]
\centering
    \begin{tabular}{lccc}
    \hline
            & \cite{rymar2021towards}  & \cite{buss2020algorithmic} & Theorem~\ref{thm:gen}\\
    \hline
        Shortest  (passive)&  $O(n^2\,m\, T^2)$ & $O(n^3\, T^2)$ & $O(n^2T\,+ nmT )$\\
        Shortest $k$-restless (passive)&  $O(n^2\,m\, T^2)$ & - & $O(n^2T\,+ nmT )$\\
        Shortest foremost (passive) &  $O(n^2\,m\, T^2)$ & $O(n^3\, T^2)$ & $O(n^2T\,+ nmT )$\\
        Shortest  (active)& -  & - & $O(n^2T^2\,+ nmT )$\\
        Shortest $k$-restless  (active)& -  & - & $O(n^2T^2\,+ nmT )$\\
    \end{tabular}
    \caption{Improvement of previously known results by Theorem~\ref{thm:gen}. The results on active walks were not studied in this form to our knowledge. All results hold for non-strict and strict variants.}
    \label{tab:res}
\end{table}

\textbf{Discussion}. The authors of \cite{buss2020algorithmic,rymar2021towards} showed that for passive walks, the overall betweenness of nodes $B(v)$ (not temporal nodes) can be computed in $O(n^3\,T^2)$ and $O(n^2\,m\, T^2)$ respectively. Since the maximal number of temporal arcs is $(n-1)^2T$, our bounds are always better than the previously known ones. Additionally, in the introduction we mentioned using \textit{Brandes approach to its full extent}, since these previous approaches when $T=1$ reduce to $O(n^3)$ and $O(n^2\,m)$ while our analysis lead to $O(nm + n^2)$. Therefore our approach leads to the static optimal time algorithm if the temporal graph is static. Table~\ref{tab:res} summarises our results compared to the other two when taking the overall betweenness of nodes. 

It is worth to note that out of the $6$ variants mentioned only the active version of shortest foremost can not be computed using our algorithm. In Section~\ref{sec:conclu} we discuss why the same result on active shortest foremost walks does not hold. 
\section{Main algorithms and proofs}\label{sec:proofs}
According to Remark~\ref{rem:equiv} we only need to consider $3$ variants in our proofs that are (active, $k$-restless), (passive, $k$-restless) and (passive, shortest foremost) since  $k=\infty$ covers the classical shortest walks criteria. We will prove our results for both (active, $k$-restless), (passive, $k$-restless) walks and in Section~\ref{sec:shfm} give the necessary details for (passive, shortest foremost) variant.

We denote by $W^{pas,k}_{s(v,t)}$ the set of \textbf{passive} $k$-restless $s-(v,t)$ walks and by $W^{act,k}_{s(v,t)}$ the set of \textbf{active} $k$-restless $s-(v,t)$ walks. These are defined as: 
\begin{align*}
  W^{k,pas}_{s(v,t)} =&  \{W | W \in W_{sv}, arr(W) = t, W \text{ is $k$-restless }  \}\\ & \cup \{ \epsilon \,|\, s=v, (s,w,t) \in \E \text{ for some $w \in V$} \},  
\end{align*}
where $\epsilon$ denotes the empty walk. For active walks there can be two types of walks, either the last transition of the walk is $(w,v,t)$ which we call an exact-$s-(v,t)$ walk or, the walk arrived earlier to $v$ at time $t' < t$. Then:
\begin{align*}
    W^{k,act}_{s(v,t)} =&  \{W | W \in W_{sv}, arr(W) \leq t, W \text{ is $k$-restless }\}\\ & \cup \{ \epsilon \,|\, s=v, (s,w,t) \in \E \text{ for some $w \in V$}\}.
\end{align*}

Let $G = (V, \E, T )$ be a temporal graph. Fix a source $s \in V$ and a walk type. Then, for every temporal node $(v, t) \in V \times [T]$  we define the optimal cost from $s$ to temporal node $(v,t)$. Formally:

\[c^{k,*}_s(v,t) = \min\limits_{ W \in W_{s(v,t)}^{*,k} } (len(W)),\]

where $* \in \{ pas, act\}$. If the set of $s-v$ walks is empty then $c^{k,*}_s(v,t) = c^{fm,pas}_s(v,t) = \infty$. Now, the overall optimal values from $s$ to any time on node $v$ as defined in Equation~\eqref{eq:cost2} can be computed as:
\[c_s^{k,*}(v) = \min_{t \in [T]} (c^{k,*}_s(v,t)).\]

Finally, for fixed walk type we say that a temporal $s-(v,t)$ walk $W$ is an \textbf{optimal} $s-(v,t)$ $k$-restless walk if $c(W) = c^{k,*}_s(v,t)$. Similarly an $s-z$ walk $W$ is an \textbf{optimal} $s-z$ $k$-restless walk if  $c(W) = c^{k,*}_s(z)$. In the active type, a walk $W$ can be optimal to different times on $v$. For instance on the graph of Figure~\ref{fig:ex}. The walk $W = a \overset{1}{\rightarrow}  b
\overset{2}{\rightarrow} c$, ${W}$ is an optimal $a-(c,2)$ walk and ${W}$ is also an optimal $a-(c,4)$ walk.

\begin{table}[]
    \centering
\begin{tabular}{c|ccccccc}
\hline
 $t$ & $0$ & $1$ & $2$ & $3$ & $4$ & $5$ & $6$ \\ \hline
 $c^{\infty,pas}_a(b,t)$ & $\infty$ & $1$ & $\infty$ & $\infty$ & $\infty$ & $3$ & $\infty$ \\
 $c^{\infty,act}_a(b,t)$ & $\infty$ & $1$ & $1$ & $1$ & $1$ & $1$ & $1$ \\
 & & & & & & & \\
 $v$ & $a$ & $b$ & $c$ & $d$ & $e$ & & \\ \hline
$c^{sh}_a(v)$ & $0$ & $1$ & $2$ & $3$ & $\infty$ &  &  \\
 $c^{sh}_a(v)$ & $0$ & $1$ & $2$ & $3$ & $\infty$ &  &   
\end{tabular}
    \caption{Values of $c_s^{\infty,*}(v,t)$ and $c^{sh}_s(v)$ on the temporal graph of Figure~\ref{fig:ex}. (Upper part) values of $c^{\infty,*}_a(b,t)$ for passive walks (1st row), and for active walks (2nd row) and for all $t \in [T]$. (Lower part) Overall optimal values of $c^{sh}_a(v)$ for all $v \in V$.}
    \label{tab:c}
\end{table}

The two major steps of the proof are the following. First step is to build a predecessor graph from a fixed node $s\in V$ efficiently. This predecessor graph allows then to compute the contributions of node $s \in V$ to the betweenness centrality of all other nodes. Second step is to find a recurrence that allows to compute the aforementioned contributions efficiently. Proposition~\ref{prop:com_pred} and Proposition~\ref{prop:rec_contri} correspond to these steps. For all quantities defined in the paper when the distinction between active and passive walks is needed we write pas for passive and act for active in the superscript like $\delta^{act}_{ab}(v,t)$, means we want $\delta_{ab}(v,t)$ for active walks on the considered temporal graph. Otherwise we will drop the superscript when not necessary for instance writing $c_s(v,t)$ instead of $c^{k,*}_s(v,t)$.


Given a temporal graph and walk type. We denote by $\W^{k,*}_{s(v,t)}$ the set of \textbf{optimal $s-(v,t)$ walks}. That is 
\[ \W^{k,*}_{s(v,t)} = \{W | W \in W^{k,*}_{s(v,t)}, len(W) = c^{k,*}_{s}(v,t).\}\]

\begin{definition}[Exact optimal $s-(v,t)$ walks]\label{def:exact_svt}
We say that an $s-(v,t)$ walk $W$ is an \textit{exact optimal $s-(v,t)$ walk} if $W \in \W^{k,*}_{s(v,t)}$ and $arr(W) = t$. We denote by $\overline{\W}^{k,*}_{s(v,t)}$ the set of all \textit{exact optimal $s-(v,t)$ walks}. 
\end{definition}
Note that for passive walks $\overline{\W}^{k,*}_{s,(v,t)}$ and ${\W}^{k,*}_{s,(v,t)}$ coincide while this is not true in general for active walks. A consequence of our framework is that the empty walk is an exact $s-(s,t)$ whenever there exists at least one node $w \in V$ with $(s,w,t) \in \E$. Then $c_s(s,t) = 0$. The predecessor graph in temporal settings has been used in  \cite{rymar2021towards,buss2020algorithmic}. Here, we extend its definition to encompass active walks as well.

In order to simplify notations we will drop the superscript that is used to identify $k$ and the walk type. For instance we will write $\overline{\W}_{s(v,t)}$ instead of $\overline{\W}^{k,*}_{s(v,t)}$ and ${\W}_{s(v,t)}$ instead of ${\W}^{k,*}_{s(v,t)}$. The superscript will be added whenever necessary to distinguish between the walk types active and passive. However, it should be clear to the reader that $k$ and the walk type are always fixed.

\begin{definition}[predecessor set, successor set]\label{def:pred}
  Let $G = (V,\E,T)$ be a temporal graph,  fix a walk type and a source $s \in V$, then for all $w \in V, w \neq s$, let $\overline{\W}_{s,(v,t)}$ be the set of exact optimal $s-(v,t)$ walks:
  \begin{align*}
    pre_s(w,t') = &\{ (v,t) \in V \times [T] \mid \exists\,  m \in \overline{\W}_{s(w,t')},  m = s \overset{t_1}{\rightarrow}  \dots \overset{t}{\rightarrow}  v   \overset{t'}{\rightarrow} w \}\\ & \cup \{(s,t') \mid \exists\,  m \in \overline{\W}_{s(w,t')},  m = s \overset{t'}{\rightarrow} w \}.
  \end{align*}
The successor set of a node $succ_s(w,t') = \{(v,t) \,|\, (w,t') \in pre_s(v,t) \}$.
\end{definition}

\begin{definition}[Predecessor graph]
  The predecessor graph $G_s = (V_s,E_s)$ is the directed graph obtained from $pre_s$, whose arcs are given by
  \[ E_s = \{((v,t),(w,t')) \mid (v,t) \in pre_s(w,t') \},\]
  and its vertices $V_s$ are the ones induced by $E_s$.
\end{definition}

\begin{figure}
    \centering
    \includegraphics[scale=0.27]{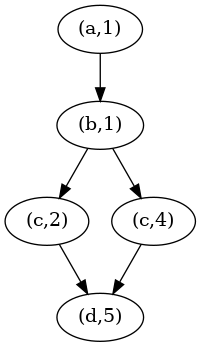} \includegraphics[scale=0.27]{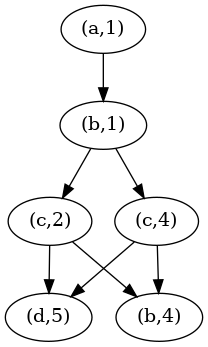}
    \caption{The predecessor graphs of shortest paths from node $a$ on the temporal graph of Figure~\ref{fig:ex}. (left) the walks are considered active and (right) the walks are considered passive.}
    \label{fig:pred}
\end{figure}
An example of the predecessor graph for active and passive walks on shortest walks (i.e $k=\infty$) is depicted on Figure~\ref{fig:pred}. As we shall see, a path in the predecessor graph represents a unique walk in the temporal graph. The next Proposition gives the relationship between these quantities.
\begin{remark}\label{rem:small}
    \item The predecessor graph $G$ from node $s$ in the active case is a subgraph of the predecessor graph $G'$ in the passive case. See Figure~\ref{fig:pred} for an example. 
\end{remark}

\begin{lemma}\label{lem:bij}
    Fix $s \in V$.  There is a one-to-one correspondence between a path $p$ in the predecessor graph $G_s$ starting from node $(s,t)$ for some $t \in [T]$ and ending in $(v,t')$ and an exact shortest $s-(v,t')$ walk.
\end{lemma}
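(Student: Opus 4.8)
The plan is to exhibit explicit maps in both directions and check that they are mutually inverse. Both a path and an exact optimal walk are, at bottom, encoded by the same data: a sequence of (node, time) pairs. Given a path $p = (s,t_0)\to(v_1,t_1)\to\cdots\to(v_\ell,t_\ell)$ in $G_s$ ending at $(v,t')=(v_\ell,t_\ell)$, I read off the temporal walk $W(p)=s\overset{t_1}{\rightarrow}v_1\overset{t_2}{\rightarrow}\cdots\overset{t_\ell}{\rightarrow}v_\ell$; conversely, given an exact optimal walk $W$, I record its sequence of visited nodes together with the transition times to obtain a candidate path $p(W)$. The goal is to show that $W(\cdot)$ lands in $\overline{\W}_{s(v,t')}$, that $p(\cdot)$ lands in the set of paths of $G_s$ from a source node to $(v,t')$, and that the two are inverse.

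First I would unpack the arc relation. By Definition~\ref{def:pred}, an arc $((v_{i-1},t_{i-1}),(v_i,t_i))\in E_s$ certifies the existence of an exact optimal walk ending with the transition $v_{i-1}\overset{t_i}{\rightarrow}v_i$ whose previous arrival at $v_{i-1}$ is $t_{i-1}$. This immediately yields three facts I will use repeatedly: $(v_{i-1},v_i,t_i)\in\E$, the time constraint $t_{i-1}\le t_i$, and the $k$-restless constraint $t_i-t_{i-1}\le k$. Hence $W(p)$ is a legitimate $k$-restless temporal walk arriving at $(v_\ell,t_\ell)$. The substantive point is that $W(p)$ is moreover \emph{exact optimal}. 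For this I would establish the monotonicity
\[ ((v,t),(w,t'))\in E_s \implies c_s(v,t)+1\le c_s(w,t'), \]
by deleting the last transition of the witnessing exact walk: its prefix reaches $(v,t)$ with length $c_s(w,t')-1$, so $c_s(v,t)\le c_s(w,t')-1$. Summing along $p$ and using $c_s(s,t_0)=0$ (the empty walk is exact at any source temporal node carrying an outgoing arc) gives $\ell\le c_s(v_\ell,t_\ell)$; since $W(p)$ is itself a walk reaching $(v_\ell,t_\ell)$ of length $\ell$, the reverse inequality $c_s(v_\ell,t_\ell)\le\ell$ holds too, forcing $len(W(p))=c_s(v_\ell,t_\ell)$ with $arr(W(p))=t_\ell$, i.e. $W(p)\in\overline{\W}_{s(v,t')}$. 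The same monotonicity shows $c_s$ is strictly increasing along the arcs of $p$, so all temporal nodes of $p$ are distinct and $p$ is automatically a (simple) path; it also pins down the source temporal nodes as exactly the cost-$0$ vertices, anchoring where every such path must begin.

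For the reverse map I would prove the \emph{optimal substructure} of exact walks: every prefix $W^{(i)}$ of an exact optimal $s$–$(v,t')$ walk $W$ is itself an exact optimal $s$–$(v_i,t_i)$ walk. Granting this, each consecutive pair satisfies $(v_{i-1},t_{i-1})\in pre_s(v_i,t_i)$ (witnessed by $W^{(i)}$ itself, or by the direct-transition clause when $v_{i-1}=s$), so $p(W)$ is a genuine path of $G_s$ from $(s,t_1)$ to $(v,t')$. Since $W(p(W))=W$ and $p(W(p))=p$ hold termwise on the shared (node, time) encoding, the two maps are mutually inverse, which is the claimed bijection. I expect the optimal substructure step to be the main obstacle, and specifically its interaction with \emph{active} $k$-restless walks: a priori a shorter walk might reach $(v_i,t_i)$ by arriving strictly earlier, seemingly contradicting optimality of the prefix. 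Ruling this out requires using that in the active setting the witness must remain present at $v_i$ up to time $t_i$, and that the $k$-restless bound limits how early such an arrival can be while still being extendable by the remaining transitions of $W$; the \emph{exact} requirement $arr=t_i$ is precisely what keeps the correspondence single-valued here, since an active walk may be optimal to several times. I would then treat the passive case as the specialization in which $\overline{\W}$ and $\W$ coincide, and the plain shortest case as $k=\infty$ via Remark~\ref{rem:equiv}.
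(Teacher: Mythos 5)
Your path-to-walk direction is correct and takes a genuinely different route from the paper's, one that is in fact more robust. The paper argues by induction along the path: it takes the witness $M$ of the last arc, writes $M = M' \oplus (w \overset{t}{\rightarrow} v)$, and asserts that $M'$ is exact optimal (``since else $M$ would not be an exact optimal $s-(v,t)$ walk''), i.e.\ it already invokes prefix optimality in this direction. Your sandwich argument --- the inequality $c_s(v,t)+1 \leq c_s(w,t')$ along arcs of $E_s$ obtained by deleting the last transition of the witness, telescoped from $c_s(s,t_0)=0$, played against the feasibility of the length-$\ell$ walk read off from $p$ --- never needs any prefix of any witness to be optimal, and it goes through in every variant, including active $k$-restless with finite $k$.

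The gap is in your walk-to-path direction, exactly at the step you flag. Like the paper (whose proof simply states ``$P'$ is an exact optimal $s-(w,t')$ walk'' with no argument), you reduce everything to prefix optimality; but your sketched justification for active $k$-restless walks is unsound, because it appeals to extendability (``the $k$-restless bound limits how early such an arrival can be while still being extendable by the remaining transitions of $W$''), and extendability plays no role in the definition of the cost: $c^{k,act}_s(v_i,t_i)$ minimizes length over \emph{all} $k$-restless walks arriving at $v_i$ no later than $t_i$, whether or not they can be continued along $W$. Concretely, take $k=2$ and $\E = \{(s,w,1),(s,a,4),(a,w,5),(w,v,6)\}$. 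Then $M = s \overset{4}{\rightarrow} a \overset{5}{\rightarrow} w \overset{6}{\rightarrow} v$ is the unique, hence exact optimal, $s-(v,6)$ walk, yet its prefix $s \overset{4}{\rightarrow} a \overset{5}{\rightarrow} w$ is not exact optimal at $(w,5)$: the walk $s \overset{1}{\rightarrow} w$ arrives no later than $5$ with length $1$, so $c^{2,act}_s(w,5)=1$ and $\overline{\W}_{s(w,5)}=\emptyset$. Consequently $pre_s(w,5)=\emptyset$ while $(w,5) \in pre_s(v,6)$, and $M$ corresponds to no path of $G_s$ starting at any $(s,t)$. So prefix optimality --- and with it the claimed correspondence --- fails outright for active $k$-restless walks with finite $k$ under the paper's definitions; no local repair of your sketch can close this. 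In the cases where the statement does hold (passive walks for every $k$, and active walks with $k=\infty$), the substitution argument you allude to works and must be written out: a strictly shorter walk arriving at $(w,t')$ exactly (passive), or no later than $t'$ (active, $k=\infty$), extends by $W$'s final transition --- for passive $k$-restless the waiting bound $t-t' \leq k$ is inherited from $W$ --- contradicting the optimality of $W$. You should therefore restrict the active claim to $k=\infty$ and make that argument explicit; note that the paper's own proof is silent on precisely the same point.
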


\begin{lemma}\label{lem:dag}
    The predecessor graph $G_s$ from $s \in V$ is acyclic.
\end{lemma}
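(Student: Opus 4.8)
The plan is to exhibit a quantity that strictly increases along every arc of $G_s$ and thereby forbids directed cycles. The natural choice is the optimal cost $c_s(\cdot)$ itself, used as a ranking function on the temporal nodes: I will argue that following any arc strictly increases $c_s(\cdot)$. Since this cost is integer-valued (it is the length of a walk), a function that strictly increases along every arc can never return to a previously visited vertex, which rules out directed cycles at once. This reduces the whole statement to a single local inequality on arcs.

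Concretely, I would take an arbitrary arc $((v,t),(w,t'))$ of $G_s$ and show $c_s(v,t) < c_s(w,t')$. By Definition~\ref{def:pred}, such an arc witnesses an exact optimal walk $m \in \overline{\W}_{s(w,t')}$ whose last transition is $v \overset{t'}{\rightarrow} w$ and whose preceding visited node is exactly $(v,t)$; the degenerate source case $m = s \overset{t'}{\rightarrow} w$ with tail $(s,t')$ is handled identically, since then $c_s(s,t')=0<1=c_s(w,t')$. By definition of $\overline{\W}_{s(w,t')}$ we have $len(m) = c_s(w,t')$. Deleting the last transition of $m$ leaves a temporal $s$–$(v,t)$ walk $m'$ arriving exactly at time $t$ with $len(m') = c_s(w,t') - 1$.

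The only point to verify is that $m'$ is admissible for the cost under consideration, so that it actually bounds $c_s(v,t)$. But removing the final transition cannot create a violation of the time-ordering constraint nor of the $k$-restless bound, since both constraints only involve consecutive transitions that are all still present in $m'$, and the strict/non-strict and active/passive conventions are unaffected by a deletion at the end. Hence $m'$ is a valid exact $s$–$(v,t)$ walk, giving $c_s(v,t) \le len(m') = c_s(w,t') - 1 < c_s(w,t')$, as desired.

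Finally I would conclude: along any directed path in $G_s$ the value $c_s(\cdot)$ strictly increases, so no vertex can be repeated and $G_s$ contains no directed cycle. The argument is uniform over the three costs and over both walk types, which is precisely what is required. The one place demanding care is the \emph{active} case: there a node is visited at several consecutive times, so I must keep track of the fact that every vertex appearing in $G_s$ is indexed by an \emph{exact} arrival time (this is how $pre_s$ is defined), ensuring that $c_s(v,t)$ is the genuine exact cost at that vertex and really drops by one when the last transition is removed. Once this bookkeeping is in place, the same one-line inequality closes all cases.
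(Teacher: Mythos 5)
Your proof is correct, but it takes a genuinely different route from the paper. The paper derives acyclicity from Lemma~\ref{lem:bij}: a path in $G_s$ corresponds to an exact optimal $s-(v,t)$ walk, and an exact optimal walk cannot contain a cycle, since excising the cycle would produce a strictly shorter valid walk, contradicting optimality. You instead avoid the correspondence lemma entirely and give a local potential-function argument: every arc $((v,t),(w,t'))$ of $G_s$ is witnessed by an exact optimal walk whose last transition can be deleted, and the surviving prefix is still admissible (time-ordering, strictness and $k$-restlessness are constraints on consecutive transitions, all of which survive truncation), so $c_s(v,t) \le c_s(w,t') - 1$; a strictly increasing ranking forbids directed cycles. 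What each approach buys: the paper's proof reuses machinery it has already established, but its cycle-excision step deserves care precisely in the $k$-restless setting --- the paper itself observes (Remark after Equation~\eqref{eq:cost3}) that shortest $k$-restless walks need not be paths, because shortcutting can lengthen a waiting time beyond $k$; the excision is safe only for cycles at the level of temporal nodes, a point the paper's one-line proof glosses over. Your prefix-truncation argument is immune to that subtlety, since deleting a final transition only removes constraints, and it needs nothing beyond Definition~\ref{def:pred}. One small imprecision on your side: your claim of uniformity ``over the three costs'' is slightly off for the foremost cost, where $c^{fm}_s$ is not a walk length, so the identity $c_s(v,t) \le len(m') = c_s(w,t')-1$ must be replaced by $c^{fm}_s(v,t) \le t\,n + len(m) - 1 < t'\,n + len(m) = c^{fm}_s(w,t')$; the strict monotonicity, and hence the conclusion, is unaffected (and the paper sidesteps this case anyway by observing in Section~\ref{sec:shfm} that the passive foremost predecessor graph coincides with the passive shortest one).
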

We define $\overline{\sigma}_{s,(v,t)} = | \overline{\W}_{s,(v,t)}|$ which corresponds to the number of optimal exact $s-(v,t)$ walks. We also denote by $\sigma_{s(v,t)}$ the number of optimal $s-(v,t)$ walks, then $\sigma_{s(v,t)} = |\W_{s(v,t)}|$.
\begin{proposition}\label{prop:overline_sig} For any temporal node $(v,t)$, there holds that:
  \begin{equation}\label{eq:sigma_overline}
    \overline{\sigma}_{s(v,t)} =  \begin{cases}
    0 &\text{ if $(v,t) \neq V_s$},\\
      1 &\text{ if $(v,t) \in V_s$ and $v = s$},\\
      \sum\limits_{(w,t') \in pre_s(v,t)} \overline{\sigma}_{s(w,t')} &\text{ otherwise.}
    \end{cases}
  \end{equation}
\end{proposition}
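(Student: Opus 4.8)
The plan is to prove the three branches of \eqref{eq:sigma_overline} separately, with the combinatorial engine being the \emph{optimal-substructure} property of exact shortest walks, used together with the bijection of Lemma~\ref{lem:bij} and the acyclicity of Lemma~\ref{lem:dag}. Acyclicity is what makes the recurrence well posed: reading it in a topological order of $G_s$, with the source temporal nodes $(s,t)$ as base cases, determines every $\overline{\sigma}_{s(v,t)}$ without circularity. Throughout I fix the walk type and $k$ and abbreviate $\overline{\W}_{s(v,t)}$, $c_s(v,t)$, etc.; the objects counted by $\overline{\sigma}_{s(v,t)}$ are the exact optimal walks, those with $arr(W)=t$ and $len(W)=c_s(v,t)$, so the whole argument is about how such a walk decomposes at its final transition.

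For the first branch, if $(v,t)\notin V_s$ then no arc of $G_s$ touches $(v,t)$, hence no path of $G_s$ ends there, and by Lemma~\ref{lem:bij} there is no exact optimal $s$-$(v,t)$ walk; thus $\overline{\W}_{s(v,t)}=\emptyset$ and $\overline{\sigma}_{s(v,t)}=0$. (The converse inclusion $(v,t)\in V_s\Rightarrow\overline{\W}_{s(v,t)}\neq\emptyset$ also holds: a vertex of $V_s$ is an endpoint of some arc, and the prefix of the witnessing walk up to $(v,t)$ is itself exact optimal by the substructure property below.) For the second branch, if $(v,t)\in V_s$ and $v=s$ then $c_s(s,t)=0$, and the only walk of length $0$ is the empty walk $\epsilon$, which by construction is exact to $(s,t)$ precisely in the situation that places $(s,t)$ in $V_s$; hence $\overline{\W}_{s(s,t)}=\{\epsilon\}$ and $\overline{\sigma}_{s(s,t)}=1$.

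The substance is the third branch, where $(v,t)\in V_s$ and $v\neq s$. I would establish a bijection $\phi\colon \overline{\W}_{s(v,t)}\to\biguplus_{(w,t')\in pre_s(v,t)}\overline{\W}_{s(w,t')}$ and then take cardinalities, the union being disjoint because a walk has a unique endpoint. The map $\phi$ deletes the last transition: any $W\in\overline{\W}_{s(v,t)}$ has $len(W)=c_s(v,t)\geq 1$, so it ends with a transition $(w,v,t)$ arriving from $w$ at some time $t'$, and I set $\phi(W)$ to be the prefix $W'$, an $s$-$(w,t')$ walk with $arr(W')=t'$; by definition of $pre_s$ this records $(w,t')\in pre_s(v,t)$. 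The inverse re-appends to any member of $\overline{\W}_{s(w,t')}$ the unique transition $(w,v,t)$ that membership $(w,t')\in pre_s(v,t)$ witnesses; injectivity is then immediate, since $W'$ together with the fixed target $(v,t)$ determines the re-appended transition.

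The crux, and the step I expect to fight hardest with, is the optimal-substructure claim that makes $\phi$ and its inverse land in the \emph{exact optimal} sets: namely that $(w,t')\in pre_s(v,t)$ forces $c_s(v,t)=c_s(w,t')+1$, that the prefix $W'$ above is exact optimal to $(w,t')$, and that appending the witnessing transition to any exact optimal walk reaching $(w,t')$ yields an exact optimal walk reaching $(v,t)$. I would argue this by exchange: were $W'$ non-optimal, substituting a strictly shorter exact optimal walk to $(w,t')$ and re-appending $(w,v,t)$ would beat $c_s(v,t)$, a contradiction. For passive walks this is clean, because both the original and the substituted walk arrive at $w$ at the \emph{same} time $t'$, so the final transition has the identical gap $t-t'$ and feasibility is preserved automatically (and for the foremost cost the arrival time $t$ is likewise preserved, so minimizing the combined cost reduces to minimizing length). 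The genuinely delicate case is \textbf{active} $k$-restless walks, where ``optimal to $(w,t')$'' must be reconciled with the possibility of arriving at $w$ strictly before $t'$ and waiting: here one has to check that the exact-arrival refinement built into $\overline{\W}_{s(w,t')}$ and into $pre_s$ is exactly what keeps the exchanged transition inside the restless window $t-t'\leq k$, so that the bijection still closes. Once the substructure claim is in place, $\phi$ is a bijection, the cardinality identity gives the recurrence, and Lemma~\ref{lem:dag} certifies that it determines all the $\overline{\sigma}_{s(v,t)}$.
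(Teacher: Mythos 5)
Your overall route---peel off the last transition, set up a bijection with the disjoint union over $pre_s(v,t)$, and certify optimality of prefixes by an exchange argument---is exactly the skeleton of the paper's own proof, which compresses it into three sentences by invoking Lemma~\ref{lem:bij} and the unique-extension observation. For passive walks of every variant, and for active walks with $k=\infty$, your write-up is correct and in fact more explicit than the paper's: in those cases the exchange closes, because a competing walk to $(w,t')$ either must arrive exactly at $t'$ (passive), or arrives at some $t''\le t'\le t$ and can still be extended by the arc $(w,v,t)$ when there is no window constraint ($k=\infty$).

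However, the step you flagged as delicate---active $k$-restless with finite $k$---is a genuine gap, and the fix you sketch (``the exact-arrival refinement \dots keeps the exchanged transition inside the restless window'') cannot close it. The refinement helps only in the appending direction: a walk in $\overline{\W}_{s(w,t')}$ does arrive at $t'$, so re-appending $(w,v,t)$ stays inside the window. The failure is in the forward direction: membership in $\overline{\W}_{s(w,t')}$ requires length equal to $c^{k,act}_s(w,t')$, and that minimum ranges over \emph{all} walks arriving at any $t''\le t'$; such a competitor may arrive too early to take the arc $(w,v,t)$ at all, so your exchange produces no contradiction, and prefix-optimality genuinely fails. Concretely: take $k=2$, $V=\{s,x,w,v\}$, $\E=\{(s,w,1),\,(s,x,2),\,(x,w,3),\,(w,v,4)\}$, active walks. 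The walk $s \overset{1}{\rightarrow} w$ gives $c^{2,act}_s(w,3)=1$, so the prefix $s \overset{2}{\rightarrow} x \overset{3}{\rightarrow} w$ (length $2$, arriving exactly at $3$) is not optimal and $\overline{\W}_{s(w,3)}=\emptyset$, i.e.\ $\overline{\sigma}_{s(w,3)}=0$. Yet $s \overset{1}{\rightarrow} w$ cannot use $(w,v,4)$ since $4-1>k$, so $W= s \overset{2}{\rightarrow} x \overset{3}{\rightarrow} w \overset{4}{\rightarrow} v$ is the unique optimal walk to $(v,4)$, giving $\overline{\sigma}_{s(v,4)}=1$ and $pre_s(v,4)=\{(w,3)\}$; the recurrence would then yield $\overline{\sigma}_{s(v,4)}=\overline{\sigma}_{s(w,3)}=0$, a contradiction. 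You should know this defect is inherited rather than introduced: the paper's proofs of Lemma~\ref{lem:bij} and of this proposition assert prefix-optimality with the single clause ``since else $M$ would not be an exact optimal $s-(v,t)$ walk,'' which is precisely the step your counter-scenario defeats. The statement is sound for passive walks and for active shortest walks ($k=\infty$); for active restless walks with finite $k$, either the proposition's scope or the definition of $c^{k,act}_s(v,t)$ (e.g.\ restricting the minimum to walks that can still be continued from $(w,t')$) needs repair before any proof along these lines---yours or the paper's---can go through.
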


In the next proposition we need to distinguish between the two walk types. 
\begin{proposition}\label{prop:sig}
Let $s \in V$, $G_s = (V_s,E_s)$ be the predecessor graph from node $s$ and for any temporal node $(v,t) \in V \times [T]$:
\begin{equation}\label{eq:sigma}
\sigma^{pas}_{s(v,t)} = \begin{cases}
    \overline{\sigma}^{pas}_{s(v,t)} & \text{if $(v,t) \in V_s$},\\
    0 & \text{otherwise}.
\end{cases}, \quad \sigma^{act}_{s(v,t)} = \sum_{\substack{t' \leq t, \exists W \in \overline{\W}_{s,(v,t')} \\
        c(\bar{W}_t) = c_s(v,t) }} \overline{\sigma}^{act}_{s(v,t')}
\end{equation}

\end{proposition}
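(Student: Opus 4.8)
The plan is to prove the two displayed identities separately, since the passive and active cases differ precisely in whether an optimal $s$-$(v,t)$ walk is forced to arrive exactly at time $t$. For the \textbf{passive} case I would first recall that, by the definition of $W^{k,pas}_{s(v,t)}$, every passive $s$-$(v,t)$ walk satisfies $arr(W)=t$. Consequently an optimal $s$-$(v,t)$ walk automatically arrives at $t$, so $\W^{pas}_{s(v,t)} = \overline{\W}^{pas}_{s(v,t)}$ and hence $\sigma^{pas}_{s(v,t)} = \overline{\sigma}^{pas}_{s(v,t)}$. The two-branch display then follows from Proposition~\ref{prop:overline_sig}, whose first case gives $\overline{\sigma}^{pas}_{s(v,t)} = 0$ exactly when $(v,t) \notin V_s$, while for $(v,t) \in V_s$ the value is $\overline{\sigma}^{pas}_{s(v,t)}$ as claimed. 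This case is essentially bookkeeping once the coincidence $\W^{pas} = \overline{\W}^{pas}$ is observed.

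For the \textbf{active} case the core idea is to partition $\W^{act}_{s(v,t)}$ according to the exact arrival time $t' = arr(W) \le t$ of each optimal walk. Given such a $W$, deleting its trailing waiting segment at $v$ yields a walk $\bar{W}$ with $arr(\bar{W}) = t'$ and $len(\bar{W}) = len(W)$, since waiting adds no transition for the shortest and $k$-restless costs; conversely $W = \bar{W}_t$ is recovered by extending $\bar{W}$ up to $t$. I would then show, for each fixed $t'$, that the optimal $s$-$(v,t)$ walks with $arr(W) = t'$ are exactly the exact optimal $s$-$(v,t')$ walks, and that this class is nonempty only when $c_s(v,t') = c_s(v,t)$. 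Using that $c_s(v,\cdot)$ is non-increasing in its time argument (a later deadline only enlarges the feasible set), any walk arriving at $t' \le t$ satisfies $len(W) \ge c_s(v,t') \ge c_s(v,t)$, so $len(W) = c_s(v,t)$ forces $len(W) = c_s(v,t') = c_s(v,t)$; this simultaneously certifies that $\bar{W}$ is exact optimal to $(v,t')$ and that $t'$ realises the optimum. The converse inclusion is immediate: an exact optimal $\bar{W}$ at a $t'$ with $c_s(v,t') = c_s(v,t)$ has $len(\bar{W}) = c_s(v,t)$ and $arr(\bar{W}) = t' \le t$, hence lies in $\W^{act}_{s(v,t)}$.

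Since $c(\bar{W}_t) = len(\bar{W}_t) = len(\bar{W}) = c_s(v,t')$, the side condition $c(\bar{W}_t) = c_s(v,t)$ in the sum is precisely the requirement $c_s(v,t') = c_s(v,t)$, together with the nonemptiness guaranteed by the quantifier $\exists\, W \in \overline{\W}_{s,(v,t')}$. Because the classes indexed by distinct $t'$ are disjoint and cover $\W^{act}_{s(v,t)}$ (as $arr(W)$ is determined by $W$), summing $\overline{\sigma}^{act}_{s(v,t')}$ over the admissible $t'$ yields the stated identity.

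The step I expect to be the main obstacle is making the active decomposition airtight: verifying that truncation and extension are mutually inverse bijections between optimal $s$-$(v,t)$ walks with $arr(W)=t'$ and the exact optimal $s$-$(v,t')$ walks, and that optimality transfers in both directions. The delicate point is excluding ``early but long'' arrivals, namely walks reaching $v$ at some $t' < t$ with minimal exact length strictly larger than $c_s(v,t)$; these are ruled out exactly because any walk attaining $c_s(v,t)$ while arriving at $t'$ would contradict the minimality of the exact cost at $t'$, which is what the side condition $c(\bar{W}_t) = c_s(v,t)$ encodes. The length-preservation of the extension $\bar{W}_t$ is what keeps this cost comparison transparent, and is precisely the feature that fails for the active shortest-foremost cost, explaining why the same argument does not cover that variant.
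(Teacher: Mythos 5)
Your proof is correct and follows essentially the same route as the paper's: for passive walks you observe that $\W^{pas}_{s(v,t)}$ and $\overline{\W}^{pas}_{s(v,t)}$ coincide, and for active walks you decompose the optimal $s$-$(v,t)$ walks by exact arrival time $t'\leq t$, using the length-preserving extension $\bar{W}_t$ and the cost-matching condition $c_s(v,t')=c_s(v,t)$. The only difference is one of detail: the paper states this in two sentences, whereas you make the truncation/extension bijection and the monotonicity of $c_s(v,\cdot)$ explicit, which the paper leaves implicit.
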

Finally it is straight forward to compute for all $v\in V$ and $t \in [T]$, $\sigma_{sv}$ and $\sigma_{sv}(v,t)$ from the preceding. The same relation applies for both types:
\begin{equation}\label{eq:sigma_sz}
    \sigma_{sv} = \sum_{\substack{t \in [T] \\ c_s(v,t) = c_s(v)} }\overline{\sigma}_{s(v,t)}, \quad {\sigma}_{sv}(v,t) = \begin{cases}
     {\sigma}_{s(v,t)} & \text{if $c_s(v,t) = c_s(v)$}\\
     0 & \text{otherwise}.
\end{cases}
\end{equation}


We use a temporal BFS algorithm variant of the one used in \cite{buss2020algorithmic}. The relaxing technique builds the shortest $s-(v,t)$ walks and for active walks it checks if the extension of a walk arriving to $(w,t')$ is also shortest to $(w,t'')$ with $t'' > t'$. The procedure is defined in function \textsc{relax} of Algorithm~\ref{algo:BFS}.

\begin{algorithm}[h]
	\caption{\label{algo:BFS} Predecessor graph from node $s$}
	\begin{algorithmic}[1]
    \renewcommand{\algorithmicrequire}{\textbf{Input:}}
    \renewcommand{\algorithmicensure}{\textbf{Output:}}
    \Require  $G = (V,\E,T)$ : a temporal graph, $s$ : a node in $G$, $k$ : maximal waiting time for $k$-restless walks. 
    \Ensure A dictionary $dist$ containing shortest values $c_s(v,t)$ to
    temporal nodes. A dictionary $pre$ that
    contains the set of predecessor temporal nodes.
    \Function{Temporal\_BFS}{$G$,$s$,$k$}
    \State $pre, dist, Q= \textsc{initalization}(G,s)$
    \State $Q' = \textsc{Empty\_Queue()},\ell = 1$
    \While {$Q \neq \emptyset$} \label{line:loop}
    \For{$(a,t)$ in $Q$}

    \For {$a \overset{t'}{\rightarrow} b \in \E$ such that (not ($(s=a)$ and $(t' \neq t)$))}\label{line:arc} 
        \If{$(s = a)\, or \, ((t' \geq t) \,and\, (t'-t)\leq k)$} \label{line:strict}
\State $\textsc{relax}(a,b,t,t',pre,dist, Q',\ell, k)$
        \EndIf
    
    \EndFor
    \EndFor
    \State $\ell = \ell +1$
    \State $(Q,Q') = (Q',\textsc{Empty\_Queue()})$
    \EndWhile
    \State \textbf{return} $pre, dist$
    \EndFunction
  \end{algorithmic}
  	\begin{algorithmic}[1]
    \renewcommand{\algorithmicrequire}{\textbf{Input:}}
    \renewcommand{\algorithmicensure}{\textbf{Output:}}

    \Function{initialization}{$G$,$s$}
    \State $Q = \textsc{Empty\_Queue}()$
    \State $\mathrm{dist}[v] = \{t:\infty,   \forall t \in  \{0,\dots,T\} \}$ $\forall v \in V$

    \State $\mathrm{pre}[v] = \{t:\{\},   \forall t \in \{0,\dots, T \} \}$ $\forall v \in V$
     \For { {$t \in \{t'\,|\,\exists w \in V,s \overset{t'}{\rightarrow} w \in \E \}$}  }
     \State {$dist[s][t] = 0$, $pre[s][t] = \{(nil,nil)\}$}
     \State $Q.\textsc{enqueue}( (s,t) )$
     \EndFor    
    \State \textbf{return} $pre, dist, Q$    
    \EndFunction
      \end{algorithmic}
      	\begin{algorithmic}[1]
       \Function{relax}{$a,b,t,t',dist,pre, Q, \ell$, $k$}
    \If {($dist[b][t'] = \infty$) or ($dist[b][t'] \geq \ell$ and $|\,pre[b][tp]\,| = 0$)} \label{line:cond}
    \State $dist[b][t'] = \ell$, $pre[b][t'] = \{ \}$ \label{relax:cur1}
    \State $Q.\textsc{enqueue}((b,t'))$
    \For {$t'' \in \{r\,|\,\exists w,w \overset{r}{\rightarrow} b \in \E, (r>t'\, and \, (r-t') \leq k) \}$} \label{line:arc2} \Comment{for passive walks, ignore this loop.}
    \State $\textsc{relax\_extend}(b,t'',pre,dist,\ell)$ 

    \EndFor
    
    \EndIf
    \If {$dist[b][t'] = \ell$} \label{line:pred}
    \State $pre[b][t'].\textsc{add}((a,t))$
    \EndIf
    \EndFunction
    \Function{Relax\_extend}{$b,t',dist,pre,\ell$}
    \If {$dist[b][t'] > \ell $} 
    \State $dist[b][t'] = \ell$, $pre[b][t'] = \{ \}$ 
    \EndIf
    \EndFunction
  \end{algorithmic}
\end{algorithm}
    \begin{definition}[Exactly reachable temporal nodes] Let $G = (V,\E,T)$ be a temporal graph, fix a walk type and a source $s \in V$. Then we define:
    \[ER_s = \{ (v,t) \, | \, (v,t) \in V \times [T], \overline{\W}_{s(v,t)} \neq \emptyset\}.\]
\end{definition}

\begin{proposition}\label{prop:BFS}
  Algorithm \textsc{Temporal\_BFS} solves the shortest walk problem
  for a temporal graph $G=(V,\E,T)$. That is For
  all $(v,t) \in ER_s$, $dist[v][t] = c_s(v,t)$, $pre[v][t] = pre_s(v,t)$ and $(v,t)$ is added exactly once in the queue $Q$.
\end{proposition}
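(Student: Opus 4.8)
The plan is to prove the three assertions — $dist[v][t]=c_s(v,t)$, $pre[v][t]=pre_s(v,t)$, and the enqueue-once property — simultaneously, by induction on the level counter $\ell$, which is incremented exactly once per pass of the outer \textbf{while} loop. The crux is to show that $\ell$ tracks walk length: when a pass begins with counter $\ell$, the queue $Q$ holds precisely the temporal nodes of optimal cost $\ell-1$. Concretely I would maintain the loop invariant that, just before the pass with counter $\ell$, (i) $Q$ contains exactly the $(u,t'')\in ER_s$ with $c_s(u,t'')=\ell-1$, each once; (ii) for every $(u,t'')$ with $c_s(u,t'')\le \ell-1$ we already have $dist[u][t'']=c_s(u,t'')$ and $pre[u][t'']=pre_s(u,t'')$; and (iii) no node of cost exceeding $\ell-1$ has yet been enqueued. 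The base case $\ell=1$ is discharged by \textsc{initialization}: it assigns distance $0$, the sentinel predecessor, and enqueues exactly those source nodes $(s,t)$ possessing an outgoing arc at time $t$, which by the empty-walk convention are precisely the cost-$0$ nodes.

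For the inductive step I would analyze one pass at level $\ell$ and verify it re-establishes the invariant for $\ell+1$. The engine is prefix optimality: by Lemma~\ref{lem:bij} every exact optimal $s$–$(v,t)$ walk of length $\ell$ is the concatenation of an exact optimal $s$–$(u,t'')$ walk of length $\ell-1$ with one transition $u\overset{t}{\rightarrow}v$ obeying $t''\le t$ and $t-t''\le k$, so each predecessor of a cost-$\ell$ node has cost $\ell-1$ and is therefore present in $Q$ during this pass. This gives completeness: scanning the outgoing arcs of every $(a,t)\in Q$ in \textsc{relax}, subject to the guard on line~\ref{line:strict} that enforces the time-ordering and the $k$-restless window, reaches every cost-$\ell$ node. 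Soundness is the converse: I would argue that a distance is never written below the true cost, so that $dist[b][t']=\ell$ set during this pass forces $c_s(b,t')=\ell$, the guard on line~\ref{line:cond} being precisely what protects a node already finalized at a smaller level from being overwritten. Finally, predecessor accumulation: once $dist[b][t']$ has settled at $\ell$, the trailing conditional appends $(a,t)$ to $pre[b][t']$ exactly when the scanned transition closes an exact optimal walk, so at the end of the pass $pre[b][t']=pre_s(b,t')$; since predecessor arcs strictly decrease the level, Lemma~\ref{lem:dag} confirms there is no circularity in these appends.

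The enqueue-once property I would read directly off line~\ref{line:cond}: the enqueuing branch is taken only when $dist[b][t']=\infty$, or when $dist[b][t']\ge\ell$ together with $|pre[b][t']|=0$; since the branch immediately gives $(b,t')$ a genuine predecessor, the conjunct $|pre[b][t']|=0$ can hold at most once per node, so each $(b,t')\in ER_s$ is pushed exactly once. Combined with completeness this yields the claimed enqueue-once property. The passive case is the specialization obtained by deleting the \textsc{relax\_extend} loop (as the algorithm's comment indicates), and for it the argument collapses to the standard level-synchronous BFS correctness proof, with $\overline{\W}_{s(v,t)}$ and $\W_{s(v,t)}$ coinciding.

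I expect the genuine difficulty to lie in the active case and the \textsc{relax\_extend} subroutine. In the active model a temporal node is \emph{present} not only through an arriving transition but through waiting at a node after an earlier exact arrival, and \textsc{relax\_extend} must propagate the level-$\ell$ distance to these waited-at times without endowing them with predecessors. The delicate points are (i) that this propagation records the correct value $c_s(b,t')$ — which forces me to argue that waiting never beats the exact-arrival cost and that the window $(r-t')\le k$ exactly captures the reachable extension — and (ii) that the empty predecessor set is precisely the marker that lets the guard on line~\ref{line:cond} re-open a node later for a true predecessor while still enqueuing it only once. Keeping the exact-arrival bookkeeping, which feeds $\overline{\sigma}_{s(v,t)}$ and $pre_s$, cleanly separated from the waiting-based presence, which feeds $dist$ and the active count $\sigma^{act}_{s(v,t)}$, is where the argument needs the most care.
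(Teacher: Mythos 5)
Your proposal is essentially the paper's own proof: the paper likewise argues by induction on the loop counter, maintaining that at the start of iteration $\ell$ the queue contains exactly the exactly-reachable temporal nodes of cost $\ell-1$ (each enqueued once, with correct $dist$ and $pre$), uses prefix optimality for completeness, and settles the active case through the same case analysis of the guard on line~\ref{line:cond}, where \textsc{relax\_extend} leaves the predecessor set empty so that the empty-predecessor test re-opens the node for its single enqueue. The only correction needed is to restrict your invariant (ii) to nodes of $ER_s$, as the statement and the paper's invariant do: for active walks, a temporal node reachable only by waiting that has no incoming arc inside the $k$-window never has its $dist$ entry written by \textsc{relax} or \textsc{relax\_extend}, so the unrestricted claim that $dist[u][t'']=c_s(u,t'')$ for every $(u,t'')$ of cost at most $\ell-1$ is false, even though it holds on $ER_s$.
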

\begin{proposition}\label{prop:com_pred}
  Let $G = (V,\E,T)$ be a temporal graph, fix a walk type and a source $s\in V$, then the predecessor graph $G_s$ can be computed in $O( m\,T+n\,T)$.
\end{proposition}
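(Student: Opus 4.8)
The plan is to read the time bound directly off Algorithm~\ref{algo:BFS} (\textsc{Temporal\_BFS}), since the predecessor graph $G_s$ is nothing but the graph encoded by the dictionary $pre$ it returns: its arcs are the pairs recorded by the calls $pre[b][t'].\textsc{add}((a,t))$ and its vertices are those induced by these arcs. So it suffices to bound the running time of the algorithm by $O(mT+nT)$. The single fact that drives the entire analysis is the invariant established in Proposition~\ref{prop:BFS}: every temporal node $(v,t)$ is enqueued, and hence dequeued and processed, exactly once. This is what makes an amortized counting argument possible.

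First I would dispose of the cheap parts. The \textsc{initialization} routine allocates $dist[v]$ and $pre[v]$ as dictionaries with one slot per time step $t \in \{0,\dots,T\}$ for every $v \in V$, which costs $O(nT)$, plus $O(1)$ seeding per starting time of $s$. Next, the outer double loop: by Proposition~\ref{prop:BFS} each temporal node $(a,t)$ is dequeued once, and for each we scan every outgoing temporal arc $a \overset{t'}{\rightarrow} b \in \E$ and invoke \textsc{relax}, which does $O(1)$ work aside from its internal extension loop. Since at most $T+1$ temporal copies $(a,t)$ of a fixed node $a$ are ever processed, the total number of arc scans is $\sum_{(a,t)} d^+(a) \le (T+1)\sum_a d^+(a) = O(mT)$, where $d^+(a)$ is the out-degree of $a$ in $\E$. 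The same $O(mT)$ bound covers all predecessor insertions, so $G_s$ has $O(mT)$ arcs and $O(nT)$ induced vertices and can be materialized within budget.

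The delicate part, and the one I expect to be the main obstacle, is the active-walk extension loop over $t''$ inside \textsc{relax} (for passive walks this loop is skipped and there is nothing to do). I would first note that this loop lives in the same conditional branch that enqueues $(b,t')$, so by Proposition~\ref{prop:BFS} it runs at most once per temporal node $(b,t')$. For a fixed $(b,t')$ it iterates over incoming times $r$ with $t' < r \le t'+k$; a naive count gives up to $d^-(b)$ iterations per temporal node and an apparent total of $O(mT^2)$, which is too weak. The fix is to swap the order of summation and charge by arc instead of by node: a fixed incoming temporal arc $(w,b,r)$ is visited by the loop of $(b,t')$ only when $r-k \le t' < r$, i.e.\ for at most $\min(k,T) \le T$ values of $t'$ (the restless window caps the count, and when $k=\infty$ the bound $t'<r\le T$ caps it instead). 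Summing this charge over all $m$ temporal arcs yields a total of $O(mT)$ for all \textsc{relax\_extend} work.

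Combining the four contributions — $O(nT)$ for initialization, $O(mT)$ for arc scans, $O(mT)$ for the extension loops, and $O(mT+nT)$ for emitting $G_s$ — gives the claimed $O(mT+nT)$. The crux of the whole argument is the reordering/charging step of the previous paragraph: it is precisely the combination of the ``enqueued exactly once'' invariant with the restless window $r-t' \le k$ that pins each incoming arc's contribution to at most $T$ temporal nodes and prevents the active case from degrading to a quadratic-in-$T$ bound.
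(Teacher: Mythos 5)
Your proof is correct and follows essentially the same route as the paper's: both invoke Proposition~\ref{prop:BFS} to guarantee that each temporal node is scanned/enqueued exactly once, charge each temporal arc at most $T$ scans in the main loop and at most $T$ further visits in the extension loop of \textsc{relax} (Line~\ref{line:arc2}), and add the $O(n\,T)$ initialization and queue overhead to obtain $O(m\,T + n\,T)$. The per-arc charging of the active-walk extension work, which you present as the crux, is precisely the paper's counting argument, only spelled out in more detail.
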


\begin{corollary}\label{cor:quan}
For both walk types and for all $v \in V$ and $t \in [T]$ the quantities $\sigma_{s(v,t)}$ and $\delta_{sv}(v,t)$ can be computed for all temporal nodes $(v,t)$ in $O(n\,T + m\,T)$. 
\end{corollary}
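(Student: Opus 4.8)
The plan is to compute every quantity by a constant number of passes over the predecessor graph $G_s$ and over the set of temporal nodes $V \times [T]$, so that the total cost matches the time needed to build $G_s$. First I would build $G_s$ together with the arrays $dist$ (storing $c_s(v,t)$) and $pre$ (storing $pre_s(v,t)$) returned by \textsc{Temporal\_BFS}; by Proposition~\ref{prop:com_pred} this takes $O(m\,T + n\,T)$, and in particular $|V_s| + |E_s| = O(m\,T + n\,T)$ since the graph is produced within that time bound.

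Second, I would evaluate the recurrence of Proposition~\ref{prop:overline_sig} to obtain $\overline{\sigma}_{s(v,t)}$ for all temporal nodes. By Lemma~\ref{lem:dag} the graph $G_s$ is acyclic, and by Proposition~\ref{prop:BFS} its vertices are discovered in non-decreasing length order; processing them in that order guarantees that every predecessor of $(v,t)$ already carries its final value when $(v,t)$ is treated. Each arc of $E_s$ is then read exactly once in the summation over $pre_s(v,t)$, so the whole pass costs $O(|E_s| + |V_s|) = O(m\,T + n\,T)$.

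Third, I would turn $\overline{\sigma}$ into $\sigma_{s(v,t)}$ using Proposition~\ref{prop:sig}. For passive walks $\sigma^{pas}_{s(v,t)} = \overline{\sigma}^{pas}_{s(v,t)}$, so one $O(n\,T)$ scan suffices. The active case is the main obstacle: a direct evaluation of the sum in \eqref{eq:sigma} over all $t' \leq t$ would cost $O(T)$ per temporal node, i.e. $O(n\,T^2)$ overall. To reach $O(n\,T)$ I would exploit that $W^{act}_{s(v,t)} \subseteq W^{act}_{s(v,t+1)}$ for every fixed $v$ and every $k$ (a walk arriving by time $t$ also arrives by time $t+1$), so $c_s(v,t)$ is non-increasing in $t$. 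Since $c(\bar{W}_t) = len(W) = c_s(v,t')$ for $W \in \overline{\W}_{s,(v,t')}$, the condition $c(\bar{W}_t) = c_s(v,t)$ appearing in \eqref{eq:sigma} is exactly $c_s(v,t') = c_s(v,t)$, and monotonicity forces the contributing arrival times to be a maximal suffix of $\{1,\dots,t\}$. Scanning $t$ upward for each fixed $v$ with a running accumulator that resets to $\overline{\sigma}^{act}_{s(v,t)}$ whenever $c_s(v,t)$ strictly decreases and is incremented by $\overline{\sigma}^{act}_{s(v,t)}$ when it stays equal therefore yields all $\sigma^{act}_{s(v,t)}$ in $O(T)$ per node, hence $O(n\,T)$ in total.

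Finally I would compute $c_s(v) = \min_{t} c_s(v,t)$, then $\sigma_{sv}$ and $\sigma_{sv}(v,t)$ directly from \eqref{eq:sigma_sz}, and $\delta_{sv}(v,t)$ by a single division per temporal node; each of these is another $O(n\,T)$ scan. Adding the costs of all steps gives $O(m\,T + n\,T)$ uniformly for both walk types, which, via Remark~\ref{rem:equiv}, also covers the plain shortest-walk case $k = \infty$.
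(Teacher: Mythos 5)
Your proposal is correct and follows the same route as the paper's proof: build the predecessor graph via \textsc{Temporal\_BFS} (Proposition~\ref{prop:com_pred}), evaluate the recurrence of Proposition~\ref{prop:overline_sig} over the acyclic graph $G_s$ to get $\overline{\sigma}_{s(v,t)}$, convert to $\sigma_{s(v,t)}$ via Proposition~\ref{prop:sig}, and finish with Equation~\eqref{eq:sigma_sz}. The difference is one of completeness rather than of route. The paper's proof is a few lines long and simply asserts that in the active case ${\sigma}_{s(v,t)}$ ``can be computed using Equation~\eqref{eq:sigma}'', leaving implicit how the sum over $t' \leq t$ in that equation is evaluated within the stated bound; a naive evaluation costs $O(T)$ per temporal node, i.e.\ $O(n\,T^2)$ per source, which exceeds the corollary's $O(n\,T + m\,T)$ (it would still be absorbed into the $O(n^2\,T^2 + n\,m\,T)$ bound of Theorem~\ref{thm:gen}, but the corollary as stated needs more). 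Your observation that $c_s(v,\cdot)$ is non-increasing in $t$ for active walks, so that the contributing arrival times form a suffix of $\{1,\dots,t\}$ and a single upward scan with a resetting accumulator suffices, is exactly the ingredient needed to justify the $O(n\,T)$ cost of this step, and it is sound. One minor caveat: the monotonicity claim ``for every fixed $v$'' does not literally hold at $v = s$, since the empty walk is admitted in $W^{k,act}_{s(s,t)}$ only at times $t$ where $s$ has an outgoing arc, so $c_s(s,\cdot)$ need not be non-increasing; this is immaterial because source temporal nodes are degenerate ($\overline{\sigma}_{s(s,t)} = 1$ by Proposition~\ref{prop:overline_sig}, and $\delta_{ss}(s,t)$ vanishes since $\sigma_{ss} = 0$), but your scan should exclude $v = s$ explicitly.
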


The main result allowing to efficiently compute the contributions from a node $s$ to the betweenness centrality of all others is a an extension of the recurrence found by Brandes in \cite{brandes}. This recurrence has been adapted to temporal graphs in \cite{buss2020algorithmic,rymar2021towards} and here we extend it further for active walks. We first define $before_{G_s}(v,t)$ to be the largest time $t'$ such that $t' \leq t$ and $(v,t') \in V_s$. Therefore if $(v,t) \in V_s$, then $before_{G_s}(v,t) = t$. Let $G_s =(V_s,E_s)$ be the predecessor graph of node $s$, then:
\begin{proposition}[General contribution]\label{prop:rec_contri}
Fix a node $s \in V$, a walk type, then if we are considering active walks for any temporal node $(v,t) \in V \times [T]$:
  \begin{equation}\label{eq:rec_act}
      \delta^{act}_{s \bullet}(v,t) = \delta^{act}_{sv}(v,t) + \sum\limits_{\substack{ t'':= before_{G_s}(v,t)\\ (w,t') \in succ_s
      (v,t'') \\ t' \geq t}}\dfrac{\sigma^{act}_{s(v,t)}}{\sigma^{act}_{s(w,t')}} \delta^{act}_{s
    \bullet}(w,t'),
  \end{equation}
  
  if we are considering passive walks then for $(v,t) \in V_s$
\begin{equation}\label{eq:rec_pas}
  \delta^{pas}_{s \bullet}(v,t) = \delta^{pas}_{sv}(v,t) + \sum\limits_{\substack{(w,t') \in succ_{s}
      (v,t) \\ t' \geq t}}\dfrac{\sigma^{pas}_{s(v,t)}}{\sigma^{pas}_{s(w,t')}} \delta^{pas}_{s \bullet}(w,t').  
\end{equation}
\end{proposition}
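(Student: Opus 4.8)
The plan is to prove both identities by the dependency-accumulation argument of Brandes, transported from vertices to temporal nodes and organised around the predecessor graph $G_s$. The starting point is the definition $\delta_{s\bullet}(v,t)=\sum_{z\in V}\delta_{sz}(v,t)=\sum_z \sigma_{sz}(v,t)/\sigma_{sz}$, which I would split into the endpoint contribution $z=v$, giving exactly the term $\delta_{sv}(v,t)$, and the intermediate contribution $z\neq v$, from which the successor sum must emerge. The combinatorial decomposition used for the intermediate part is well-defined because $G_s$ is acyclic (Lemma~\ref{lem:dag}) and its paths are in bijection with exact optimal walks (Lemma~\ref{lem:bij}): each optimal $s$--$z$ walk that visits $(v,t)$ has a uniquely determined next temporal node, and since shortest walks are paths there is no revisiting to distort the counts.

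For the passive case I would condition each optimal $s$--$z$ walk through $(v,t)$ (with $z\neq v$) on the temporal node $(w,t')$ immediately following its visit of $(v,t)$; by Definition~\ref{def:pred} this $(w,t')$ ranges over $succ_s(v,t)$ with $t'\geq t$. The key combinatorial identity is that the number of optimal $s$--$(w,t')$ walks whose last transition is $v\overset{t'}{\rightarrow}w$ equals $\sigma_{s(v,t)}$, so the fraction of optimal $s$--$(w,t')$ walks passing through $(v,t)$ via this transition is $\sigma_{s(v,t)}/\sigma_{s(w,t')}$; this follows from Proposition~\ref{prop:overline_sig} and Proposition~\ref{prop:sig} together with the walk bijection. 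Substituting and interchanging the order of summation gives
\[\sum_{z\neq v}\frac{\sigma_{sz}(v,t)}{\sigma_{sz}}=\sum_{(w,t')\in succ_s(v,t),\,t'\geq t}\frac{\sigma_{s(v,t)}}{\sigma_{s(w,t')}}\sum_{z\neq v}\frac{\sigma_{sz}(w,t')}{\sigma_{sz}},\]
and recognising the inner sum as $\delta_{s\bullet}(w,t')$ (the endpoint term $z=w$ is absorbed since $\delta_{sw}(w,t')=1$, and $z=v$ cannot occur for paths) yields Equation~\eqref{eq:rec_pas}.

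For the active case the skeleton is identical, but a visit of $(v,t)$ no longer coincides with an exact arrival: a walk present at $(v,t)$ entered $v$ at some exact time $a\leq t$ and is waiting. The checkpoint $t'':=before_{G_s}(v,t)$ is the device handling this, and I would first record the two facts that justify it. (i) $c_s(v,t'')=c_s(v,t)$: the time $a^\star$ realising $c_s(v,t)$ is itself an exact optimal arrival, so $(v,a^\star)\in V_s$ with $a^\star\leq t''\leq t$, forcing equality. (ii) Any onward transition $v\overset{t'}{\rightarrow}w$ feasible from an earlier exact arrival $a\leq t''$ is also feasible from $t''$, since extending an exact optimal walk to $(v,t'')$ has matching cost $c_s(v,t'')+1=c_s(w,t')$ and satisfies $t'-t''\leq t'-a\leq k$, hence $(w,t')\in succ_s(v,t'')$. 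With these, conditioning on the exit $(w,t')\in succ_s(v,t'')$ with $t'\geq t$ and repeating the fraction computation is intended to produce Equation~\eqref{eq:rec_act}, the counts $\sigma^{act}_{s(v,t)}$ and $\sigma^{act}_{s(w,t')}$ being supplied by Proposition~\ref{prop:sig} and the enumeration of walks present at $(v,t)$ made precise by the extension $\bar{W}_t$.

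The main obstacle I anticipate is precisely this active-case bookkeeping. Unlike the passive case, where $\overline{\W}_{s(v,t)}$ and $\W_{s(v,t)}$ coincide and the prefix count is immediately $\sigma_{s(v,t)}$, the active numerator $\sigma^{act}_{s(v,t)}$ aggregates walks that reached $v$ at several distinct exact times, and each such walk plays a double role: it may serve as an endpoint at $(v,t)$ and/or continue through a successor. The delicate point is to establish, per temporal node $(v,t)$, that summing over $succ_s(v,t'')$ with the single factor $\sigma^{act}_{s(v,t)}/\sigma^{act}_{s(w,t')}$ neither double counts a walk across its endpoint and intermediate roles nor miscredits a walk whose active presence at $(v,t)$ is incompatible with the $k$-restless window of a particular exit. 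I expect this to require a careful matching argument, driven by the extension $\bar{W}_t$ and the $before$ checkpoint, showing that the walks counted by $\sigma^{act}_{s(v,t)}$ are in exact correspondence with the prefixes realising each successor term; reconciling the waiting semantics of active walks with the restless constraint is, in my view, the crux of the proof.
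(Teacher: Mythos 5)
Your passive-case argument is correct and follows essentially the paper's own route: the ``key combinatorial identity'' you invoke is exactly the paper's arc-dependency Lemma~\ref{lem:suff} (the prefix ratio $\sigma_{s(v,t)}/\sigma_{s(w,t')}$ times the suffix ratio $\sigma_{sz}(w,t')/\sigma_{sz}$), and your interchange of summations is the Brandes-style accumulation the paper imports from \cite{rymar2021towards}. Two small repairs are needed there: $\delta_{sw}(w,t')$ need not equal $1$ (optimal $s$--$w$ walks may arrive at several distinct times), although nothing actually needs ``absorbing'' since the $z=w$ term appears identically on both sides; and ``$z=v$ cannot occur for paths'' does not cover shortest $k$-restless walks, which need not be paths --- the correct reason is a cost argument: any $s$--$v$ walk through a successor $(w,t')$ has length at least $c_s(w,t')+1=c_s(v,t)+2>c_s(v)$, hence is not optimal, so $\delta_{sv}(w,t')=0$.

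The genuine gap is the active case, and you name it yourself. Your facts (i) and (ii) about $t''=before_{G_s}(v,t)$ are correct and are precisely the paper's justification for indexing the sum by $succ_s(v,t'')$ restricted to $t'\geq t$; but beyond that you only say the fraction computation ``is intended to'' yield Equation~\eqref{eq:rec_act}, and that a ``careful matching argument'' reconciling waiting with the $k$-restless window is still required. That matching argument \emph{is} the active-case proof; deferring it means the statement is not proved. (The paper discharges this step by invoking Lemma~\ref{lem:suff}, whose active version rests on the extension of optimal walks to $W_T$ in Definition~\ref{def:opt_walk}, and otherwise follows \cite{rymar2021towards}.) Moreover, the obstacle you flag is substantive rather than cosmetic: for finite $k$ the factor $\sigma^{act}_{s(v,t)}$ aggregates optimal walks with different exact arrival times at $v$, and an exit arc $(v,w,t')$ can be restless-feasible for only some of them. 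Concretely, take $k=2$ and arcs $(s,p,1),(p,v,2),(s,q,3),(q,v,4),(v,w,5)$: there are two optimal walks to $(v,4)$ (arriving at times $2$ and $4$), only the second of which may use the arc at time $5$; one checks $\delta^{act}_{s\bullet}(v,4)=2$ while the right-hand side of Equation~\eqref{eq:rec_act} evaluates to $1+\tfrac{2}{1}\cdot 1=3$, because the walk arriving at $(v,2)$ is credited with an exit it cannot take. So the per-arc prefix count genuinely depends on the exit time $t'$, and a complete proof must explain why (or under what restriction, e.g.\ $k=\infty$) this dependence disappears; your proposal stops exactly at the point where that work has to be done.
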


Proposition~\ref{prop:rec_contri} allows to compute the values of $\delta_{s\bullet}$ by recurrence for all temporal nodes by starting the recurrence from the sources of $G_s$. Finally, to compute the betweenness centrality of the whole temporal graph it suffices to sum $\delta_{s\bullet}$ for all $s\in V$ and use the correction formula given in Equation~\eqref{eq:hat_b} needed to go from $\hat{B}(v,t)$ to $B(v,t)$. These steps are summarised in Algorithm~\ref{algo:betweenness}. In the Algorithm function \textsc{count\_walks} applies Equation~\eqref{eq:sigma_overline} to compute $\overline{\sigma}_{s(v,t)}$ for all $v \in V$ and $t \in [T]$ and then applies Equation~\eqref{eq:sigma} to compute $\sigma_{sv}$ for all $v \in V$ and $\delta_{sv}(v,t)$ for all $v \in V$ and $t \in [T]$. Finally, function \textsc{count\_walks} returns a dictionary $del$ containing the values of $\delta_{sv}(v,t)$ for all $v \in V$ and $t \in [T]$ and another dictionary $sig$ containing the values of $\sigma_{s(v,t)}$ for all $v \in V$ and $t \in [T]$.

For active walks, we finally need to ensure that the general contribution is also computed for temporal nodes not lying on the predecessor graph. We discuss this in the Appendix and show that the values of these temporal nodes can be computed on the fly during the general contribution recurrence provided that we order the predecessor graph which adds a factor in the final complexity of active walks.
\begin{algorithm}[H]
	\caption{\label{algo:betweenness} Computes the values of $B(v,t)$ for all temporal nodes}
	\begin{algorithmic}[1]
    \renewcommand{\algorithmicrequire}{\textbf{Input:}}
    \renewcommand{\algorithmicensure}{\textbf{Output:}}
    \Require  $G = (V,\E,T)$ : a temporal graph, $k$ : maximal waiting time for restless walks or $k = \infty$ for shortest walks.  
    \Ensure $B(v,t), \forall v \in V, t \in [T]$
    \Function{Betweenness\_centrality}{$G, k$}
    \State $B(v,t) = 0, \forall v\in V, t \in [T]$
    \For {$s \in V$}
    \State $pre,d$ = \textsc{Temporal\_BFS}$(G, s, k)$ \label{line:bfs} \Comment{Algorithm~\ref{algo:BFS}}
    \State $del, sig= \textsc{count\_walks}(pre,d,s)$ \Comment{Apply Corollary~\ref{cor:quan}} \label{line:count}
    \State $cum$ = \textsc{Contributions}$(G,pre,del, sig)$ \label{line:gen} \Comment{
    see Appendix~\ref{sec:appendix}}
    \State \textsc{Update\_betweenness}$(B,cum, del)$ \label{line:update} \Comment{Apply Equation~\eqref{eq:hat_b}}
     \EndFor
\State \textbf{return} $B$
    \EndFunction
  \end{algorithmic}
\end{algorithm}

\begin{proof}[Proof of Theorem~\ref{thm:gen}]
    For active and passive walks, the total cost of the predecessor graph construction in (Line~\ref{line:bfs}) of Algorithm~\ref{algo:betweenness} is for any node $s\in V$ in $O((n+m)T)$. Then we can compute all necessary quantities for the main recurrence (Line~\ref{line:count}). This is also done in $O((n+m)T)$ as well as explained in Corrolary~\ref{cor:quan}. The recurrence of Proposition~\ref{prop:rec_contri} computing all contributions from node $s \in V$ (Line~\ref{line:gen}) can be computed in $O((n+m)T)$ for passive walks. However, for active walks the same computations can be done and by ordering the predecessor graph we can compute contributions of temporal nodes not lying on the predecessor graph (see discussion in Appendix~\ref{sec:appendix}). The predecessor graph ordering costs then $O(nT^2)$ and the overall cost of (Line \ref{line:gen}) is $O(nT^2 + mT)$ for active walks. Finally, the application of the correction formula in Line~\ref{line:update} can be done in $O(nT)$.
\end{proof}


\section{Experimental results}\label{sec:experiments}
For our experiments we built our algorithms on top of the code of \cite{buss2020algorithmic}. We implemented our algorithms with the different possible variants. We focus next on the variants of active and passive shortest walks, passive and active shortest $k$-restless walks for $k$ being equal to $10\%$ of the lifetime of the graph and passive shortest foremost walks.

We summarize here our main findings:
\begin{itemize}
    \item The active variants takes more time to be computed than the passive one as stated by Theorem~\ref{thm:gen} but this difference starts to emerge with large networks. Table~\ref{tab:gen_stat}.
    \item On the importance of times $B(t)$, the active variant points the importance of central times where central become more important in comparison with the passive variant. Figure~\ref{fig:time}.
    \item The rankings between (passive shortest), (active shortest) and the betweenness centrality computed on the aggregated static graph are positively correlated. There are more differences between these variants when looking at the intersection of the top ranked $10$-nodes. Figure~\ref{fig:heat}.
    \item Predicting the top $10$ ranked nodes by looking only at the first few interactions is much more accurate for passive shortest foremost variant compared to the others. Figure~\ref{fig:pred_inter_time}. 
\end{itemize}
\begin{table}[h!]
    \centering
      \resizebox{\textwidth}{!}{  
    \begin{tabular}{l|cccc|c|cc|cc|c|c}
    \hline  
dataset & nodes & events & edges & agg\_edges & Bu{\ss} & sh pas & sh act  & sh-rl act & sh-rl pas & sh-fm pas  & static \\ \hline
gre & $1547$ & $1300$ & $113112$ & $3680$ & $1924.1$  & $2921.2$  & $9067.7$  & $1496.3$  & $5535.6$  & $3839.8$ & $21.156$ \\
ren & $1407$ & $10825$ & $107384$ & $3718$ & $2480.7$  & $3343.4$  & $26144$  & $1656.3$  & $22379.$  & $3329.3$ & $29.046$ \\
bel & $1917$ & $1132$ & $120951$ & $6440$ & $2530.8$  & $3512.9$  & $12691.$  & $2056.5$  & $8111.3$  & $3753.4$ & $45.057$ \\
kuo & $549$ & $1211$ & $30545$ & $1952$ & $169.45$  & $242.67$  & $683.99$  & $119.67$  & $514.47$  & $240.69$ & $1.9935$ \\
prim & $242$ & $3100$ & $125773$ & $16634$ & $1146.2$  & $1701.6$  & $454.91$  & $761.39$  & $435.49$  & $1597.1$ & $4.0661$ \\
hs11 & $126$ & $5609$ & $28539$ & $3418$ & $75.741$  & $98.333$  & $101.61$  & $40.064$  & $89.428$  & $107.29$ & $0.3405$ \\
hs12 & $180$ & $11273$ & $45047$ & $4440$ & $208.56$  & $287.82$  & $371.63$  & $142.94$  & $384.21$  & $299.61$ & $0.9211$ \\
hp & $75$ & $9453$ & $32424$ & $2278$ & $121.75$  & $170.15$  & $76.980$  & $75.908$  & $59.653$  & $177.97$ & $0.1877$ \\
ht & $113$ & $5246$ & $20818$ & $4392$ & $43.157$  & $61.687$  & $63.644$  & $34.362$  & $65.114$  & $66.632$ & $0.2166$ \\
wp & $92$ & $7104$ & $9827$ & $1510$ & $12.020$  & $16.154$  & $51.370$  & $7.4984$  & $50.847$  & $17.592$ & $0.0782$ \\
    \end{tabular}
    }
    \caption{Statistics for the datasets. From left to right  number of nodes (nodes), number of times (events), number of temporal edges (edges), number of edges in the static representation of $G$. The execution time in seconds of (Bu{\ss}) implementation of \cite{buss2020algorithmic}, Algorithm~\ref{algo:betweenness} active shortest (sh act), passive shortest (sh pas),  active shortest $k$-restless (sh-rl act), passive shortest restless (sh-rl pas), passive shortest foremost (sh-fm) and on the aggregated static graph of $G$.}
    \label{tab:gen_stat}
\end{table}

Our code is open-source\footnote{\href{https://github.com/busyweaver/code_temporal_betweenness_}{github.com/busyweaver/code\_temporal\_betweenness\_}} and is written in \CC. 
We used an Intel(R) Xeon(R) Silver $4210R$ CPU $2.40$GHz without parallel processes. The datasets are divided into two types. Public transport datasets \cite{cities}. The datesets are:  gre (grenoble), ren (rennes), bel (belfast) and kuo (kuopio) and social contact traces from {\href{sociopatterns.org}{sociopatterns.org}} namely: prim (primaryschool), hs11 (HighSchool2011), hs12 (HighSchool2012), hp (Hospital Ward), ht (HyperText) and wp (Workplace). All datasets are available publicly.

On table~\ref{tab:gen_stat} we give information about the datasets that we used as well as running times of our algorithms and the one of \cite{buss2020algorithmic}. Our implementation is complementary to theirs since the authors of \cite{buss2020algorithmic} compute the overall betweenness centrality of nodes $B(v)$ for passive shortest walks. In comparison, our implementation computes all the values of $B(v,t)$, $B(v)$ and $B(t)$ for both active and passive variants. We note that according to Theorem~\ref{thm:gen} the active version is slower than its passive counterpart and the difference becomes more clear on larger graphs while on smaller networks the execution times are more comparable and for some instances the active version is slower than the passive one. This happens whenever the overall cost of ordering the predecessor graphs is less important than the overall gain in the sizes of the predecessor graphs which are smaller in the active case, see Remark~\ref{rem:small}.

\begin{figure}[h!]
    \hspace{-2.1cm}
    \includegraphics[scale=0.3]{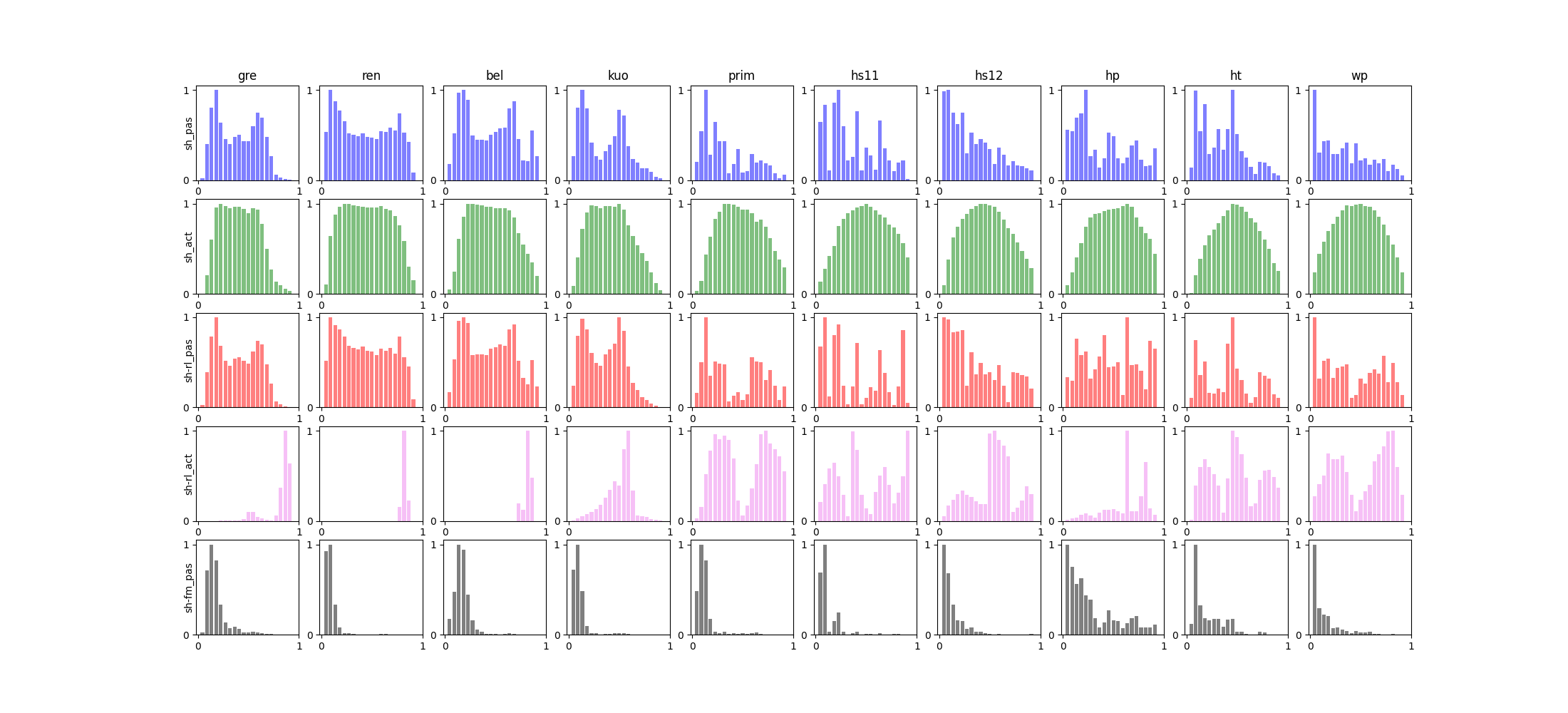}    
    \caption{Distribution of time centrality values $B(t)$ for the datasets. Each column represents a dataset. (1st row) correspond to the distribution of $B(t)$ for passive shortest, (2nd row) active shortest, (3rd row) passive shortest restless, (4th row) active shortest restless  and (5rd row) passive shortest foremost. The x-axis represents the renormalized life time of the temporal graph and the y-axis represents the values of $B(t)$ grouped into $20$ bars.}
    \label{fig:time}
\end{figure}

On Figure~\ref{fig:time}, we see that the distribution of the values of $B(t)$ are much more concentrated around central times (those in the middle of the temporal graph) for the shortest active walks than for shortest passive walks due to the contribution of intermediary node (first two rows) showing the difference between active and passive walks. For passive shortest walks important times tend to be in the beginning of the lifetime of the graph since many shortest walks can be formed when starting walks early in time and combining times later on. Finally, for passive foremost variant (last row) the most important times are seen in the beginning of the graph due to the fact that this measure is focused on temporal walks arriving the earliest.

\begin{figure}[h!]
    \centering
    \includegraphics[scale=0.7]{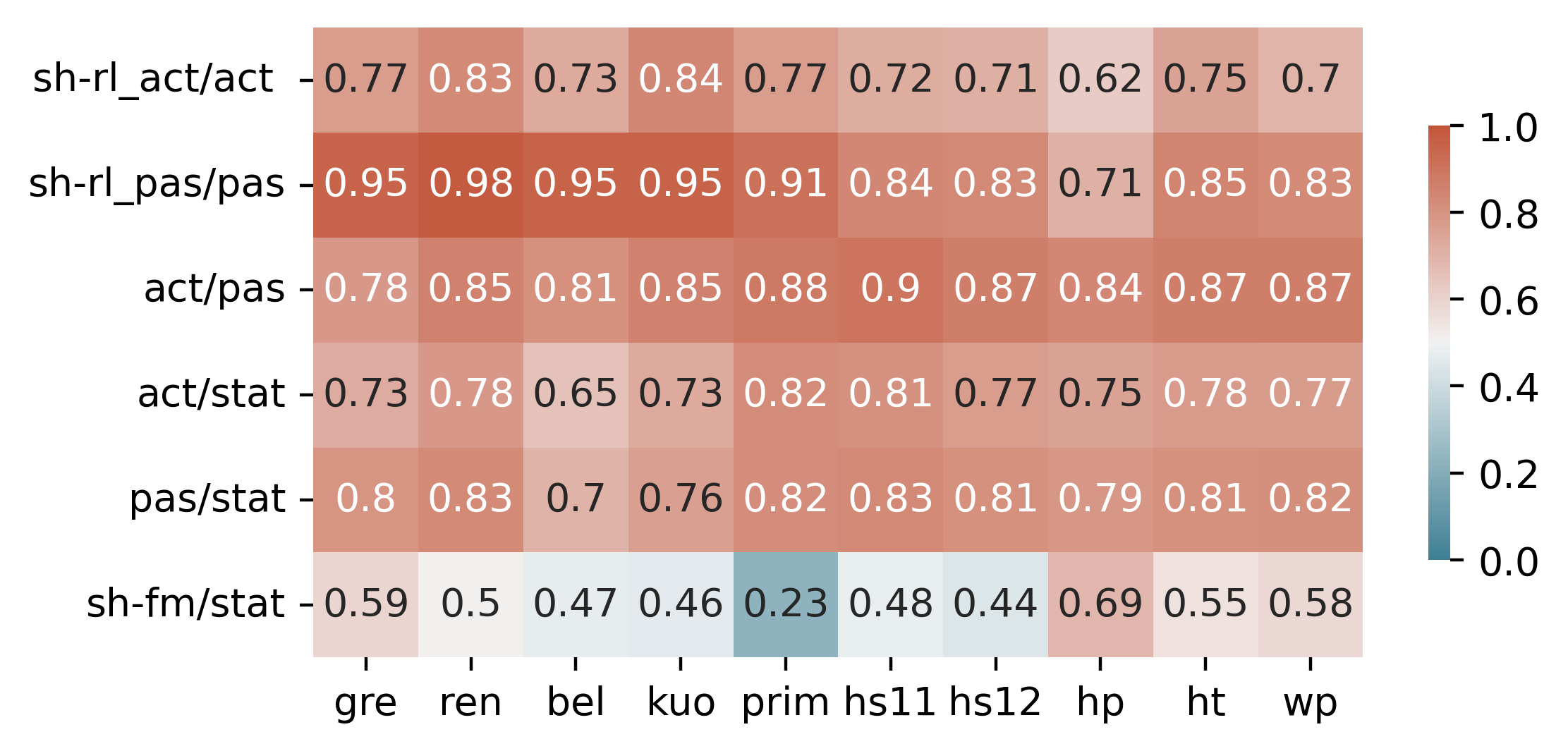}
    \includegraphics[scale=0.7]{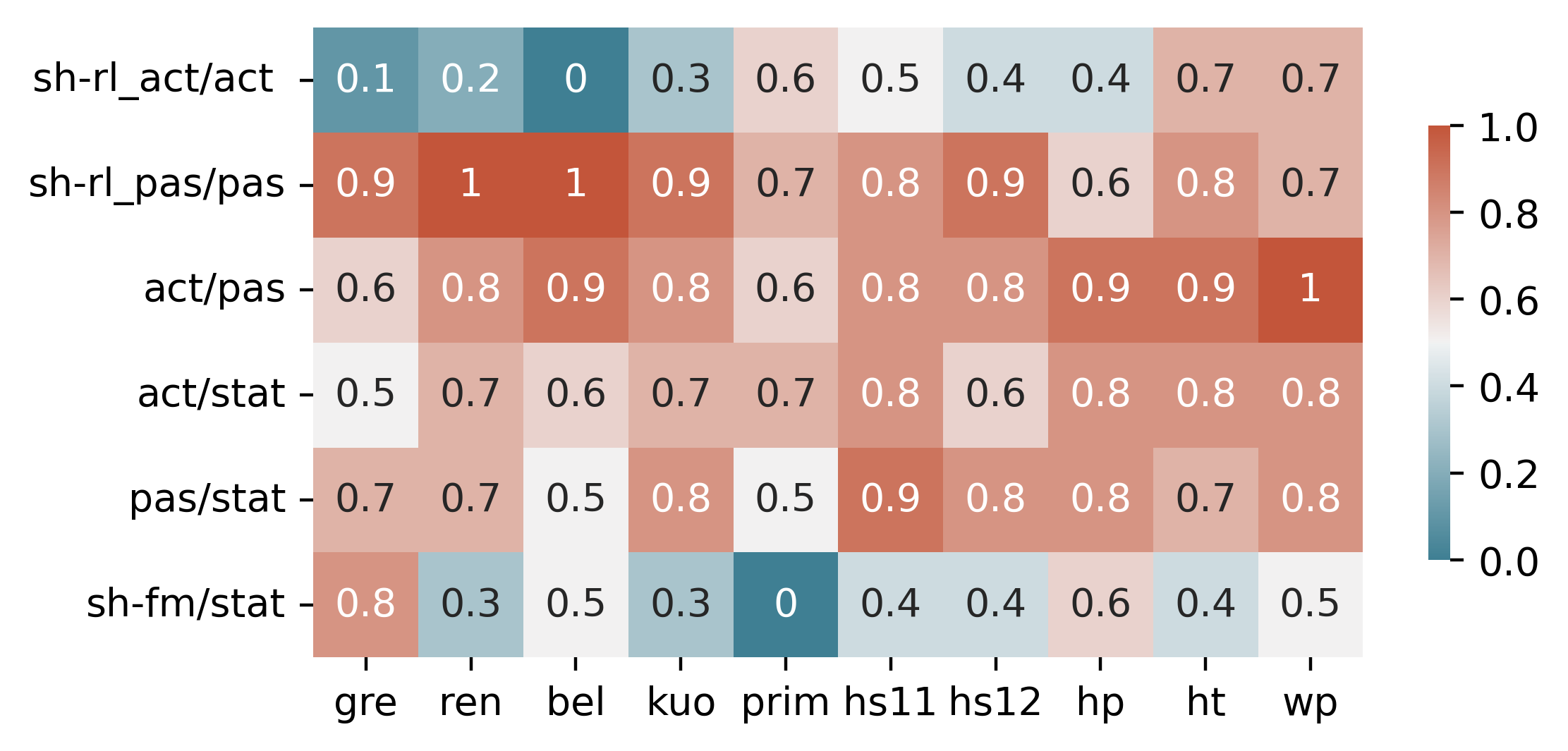}
    \caption{Heatmap of betweenness centrality of $B(v)$ comparisons of the datasets. (left) Kendall-tau rank correlation rankings and (right) intersection rate of the top $20$ nodes. In the figure act stands for active variant, pas for passive variant, sh-rl\_act active shortest restless, sh-rl\_pas passive shortest restless and stat for the static betweenness centrality on the aggregated graph.}
    \label{fig:heat}
\end{figure}
We compared the ranking correlation and intersection size of the different proposed variants together with the static betweenness centrality on the aggregated graph gives. We also compared shortest and shortest $k$-restless variants together.  On Figure~\ref{fig:heat} we see that the rankings and intersection of top $20$ nodes of $B(v)$ for several real world datasets show that passive betweenness centrality and the static one have a high correlation for all our datasets, always higher than the correlation between the active and static one (rows 4 and 5). Our results suggest that if we want an approximate ranking of $B(v)$ for passive (and to a lesser extent active) shortest walks, the static betweenness centrality gives a good approximation to it and runs much faster ($100$ times faster for all our datasets) than the temporal version. While the comparison between active temporal betweenness centrality and the static one show less correlation in general. The comparison between active and passive variants shows high correlation for $B(v)$ while the behaviour is largely different for $B(t)$ between active and passive variants. We also notice that the ranking correlation and intersection size is higher between passive variant of shortest and shortest $k$-restless than it is with its active counterpart (first two rows). If we care about arriving first to nodes (passive shortest foremost version) we see that the correlations with the betweenness on the static aggregated graph has a low correlation (last row).



Finally, in many practical applications we have access only to the first few interactions of the graph but we still want to predict the node rankings. Here we focus on predicting the overall node ranking $B(v)$ rather than the temporal one $B(v,t)$ - in fact it has been argued in a close context that predicting the temporal node centrality evolution is difficult \cite{magnien} -. In order to do so we introduce 
\begin{definition}[The graph of first $\mu$ times of $G$]\label{def:mug}
Let $G = (V, \E, T)$ be a temporal graph and let $\mu \in [0,1]$, and let  then $G^{\leq \mu} = (V, \E', \mu\,T)$ with $\E' = \{(u,v,t) \,|\, (u,v,t) \in \E, t \leq \mu\,T \}$.
\end{definition}
\begin{figure}[h!]
    \hspace{-2.1cm}
    \includegraphics[scale=0.3]{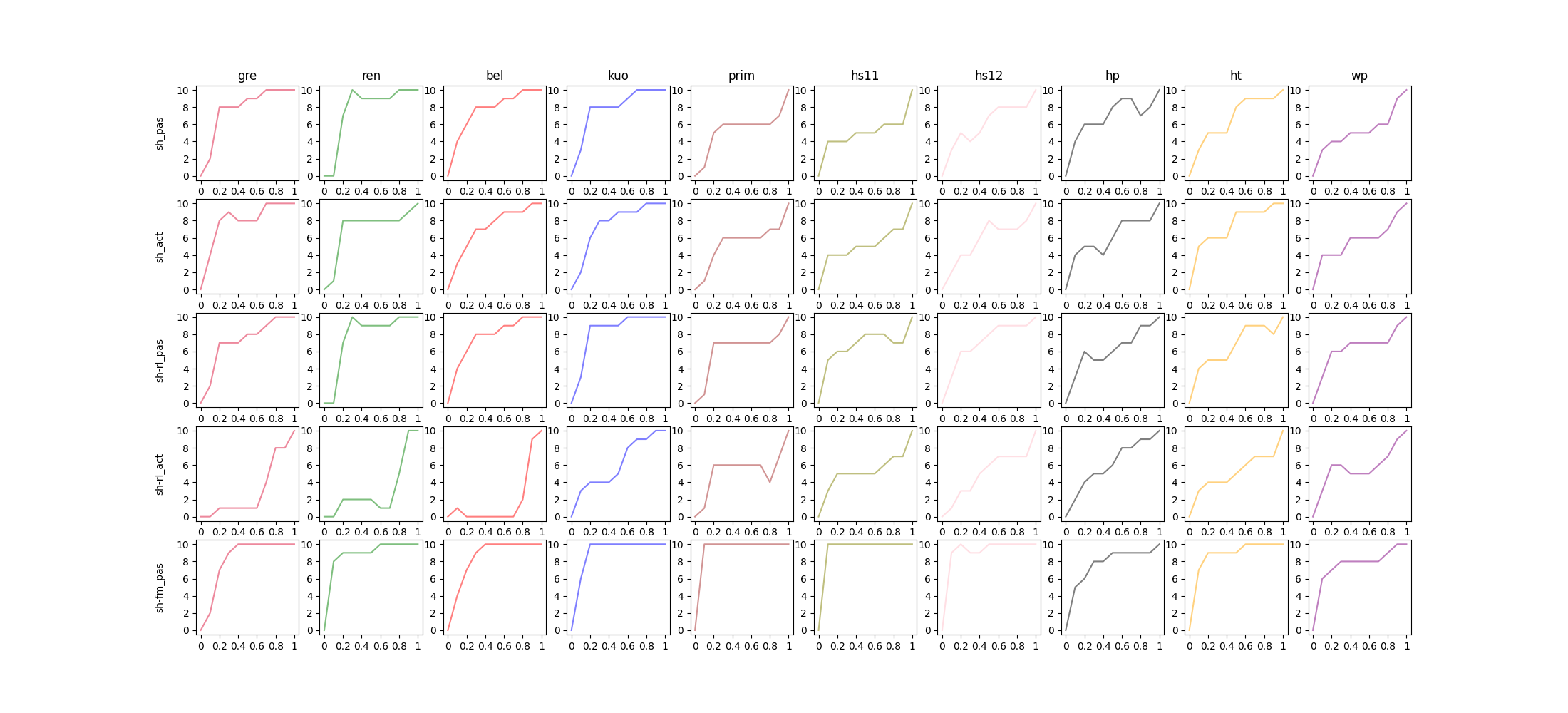}    
    \caption{Each column corresponds to a dataset. (1st row) correspond to the distribution of $B(t)$ for passive shortest, (2nd row) active shortest, (3rd row) passive shortest restless, (4th row) active shortest restless  and (5rd row) passive shortest foremost. Each graph has on its x-axis the $\mu$ values and on its y-axis the size of the intersection between top $10$ ranked nodes of nodes $B_G(v)$ of the temporal graph $G$ and top $10$ ranked nodes of the $B_{G^{\leq \mu}}(v)$.}
    \label{fig:pred_inter_time}
\end{figure}
Our results are summarized in Figure~\ref{fig:pred_inter_time} in which each plot corresponds to how well are the top $10$ nodes ranked when looking only at the first times of the graph. Therefore, the faster each plot reaches $10$ the fewer interactions we need to observe to correctly identify the top $10$ ranked nodes. It turns out that this depends on the criteria that we consider. For instance, considering passive shortest foremost variant (3rd row) looking at the first $10\%$ interactions gives a good approximation of the top ranked nodes. In fact this is in accordance with the last row of Figure~\ref{fig:time} where we see that most important times $B(t)$ in the graphs are the ones in the beginning of it.

\section{Details of the proofs}\label{sec:appendix}
A path in the predecessor graph will be denoted denoted $(s,t_1) \rightarrow (v_1,t_1) \rightarrow \dots \rightarrow (v_n,t_n)$. We also use the operator $\oplus$ for the concatenation of walks.  
\begin{proof}[Proof of Lemma~\ref{lem:bij}]
    By induction we show that a path in the predecessor graph $G_s$ starting at node $s$ corresponds to an exact optimal $s-(v,t)$ walk. Let $p$ be a path in the predecessor graph. Then $p \equiv p' \rightarrow (v,t)$ for some path $p'$. Let $(w,t')$ be the last node in $p'$. By induction hypothesis, suppose that $p'$ corresponds to an exact optimal $s-(w,t')$ walk $P'$. Since $((w,t'),(v,t)) \in E_s$, this implies that there exists an exact optimal $s-(v,t)$ walk $M$ that passes through $(w,t')$ before arriving to $(v,t)$. Let $M \equiv M' \oplus (w \overset{t}{\rightarrow}  v)$. $M'$ is necessarily an exact optimal $s-(w,t')$ walk since else $M$ would not be an exact optimal $s-(v,t)$ walk. Now $M'$ and $P'$ are both exact optimal $s-(w,t')$ walks implying that they have the same length. Since $M$ extends $M'$ with a single edge then $c_{s}(v,t) = c_{s}(w,t')+1$. Therefore, $P \equiv P' \oplus (w \overset{t}{\rightarrow}  v)$ is an exact optimal $s-(v,t)$ walk as well. 
    On the other hand, We show by induction that an exact optimal $s-(v,t)$ walk $P$ corresponds to a single path in $G_s$ starting at $(s,i)$ for some time $i$. Let $(w,t')$ be the last node appearance before $(v,t)$. Then $P \equiv P' \oplus (w \overset{t}{\rightarrow}  v)$. $P'$ is an exact optimal $s-(w,t')$ walk. Then by induction hypothesis, let $p'$ be the corresponding path in predecessor graph. Then $((w,t'),(v,t)) \in E_s$ and the path $p \equiv p' \rightarrow (v,t)$ is a path in the predecessor graph.
\end{proof}

\begin{proof}[Proof of Lemma~\ref{lem:dag}]
Using Lemma~\ref{lem:bij}, there are no exact optimal $s-(v,t)$ walks containing cycles because else we could immediately construct an exact $s-(v,t)$ walk $W$ without cycles and $len(W) < c_{s}(v,t)$ which is impossible. Hence the predecessor graph from $s$ is a acyclic.
\end{proof}

\begin{proof}[Proof of Proposition~\ref{prop:overline_sig}]
We know that $\overline{\sigma}_{s(v,t)}$ corresponds to the number of walks in the predecessor graph from $s$ ending in $(v,t)$ by Proposition~\ref{lem:bij}. By induction any optimal exact $s-(v,t)$ walk comes from a predecessor $(w,t')$ with  $(w,t') \in pre_s(v,t)$. Each optimal exact $s-(w,t')$ walk can be extended uniquely by appending $(w \overset{t'}{\rightarrow} v)$ to it and make it an optimal exact $s-(v,t)$ walk.
  \end{proof}

  \begin{proof}[Proof of Proposition~\ref{prop:sig}]
For passive walks, the set $\W_{s(v,t)}$ of optimal-$s-(v,t)$ walks and the set $\overline{\W}_{s(v,t)}$ of exact optimal-$s-(v,t)$ walks coincide. Therefore, $\sigma^{pas}_{s(v,t)} =\overline{\sigma}_{s(v,t)}$. For active walks, since an exact optimal $s-(v,t)$ walk $W$ can still be an optimal $s(v,t')$ walk for $t'>t$ if $c_{s}(v,t) = c_{s(v,t')}$ since the walks extend after their last transition.
\end{proof}

\begin{proof}[Proof of Proposition~\ref{prop:BFS}]
We show that at the start $i$-th iteration of Line~\ref{line:loop} in Algorithm~\ref{algo:BFS},  all temporal nodes which are exactly reachable have $\mathrm{dist}[v][t] = c_{s}(v,t) = i$, and for temporal nodes $(v,t)$ with an exact optimal $s-(v,t)$ walk $(v,t) \in Q$ and $pre[v][t] = pre_s(v,t)$ and $(v,t)$ is added exactly once to the queue. The property is true just before entering the loop the first time. Only temporal nodes $(s,t)$ where $(s,w,t) \in \E$ for some $w \in V$ are in the queue. The condition (not ($(s=a)$ and $(t' \neq t)$)) can only be met during the first iteration of the main loop. It ensure that all paths of length $1$ created have their first appearance time at the time of their first transition. Suppose that the property holds for iteration $i-1$. Then at iteration $i$. All temporal nodes in $Q$ are such that $\mathrm{dist}[v][t] = c_{s}(v,t) = i-1$ and each of them has an exact optimal $s-(v,t)$ walk to it. Each one of them is scanned for its outgoing neighbors $(w,t')$ and relaxed using function \textsc{relax} of Algorithm~\ref{algo:BFS}. 

If $(w,t')$ was never reached neither with an exact walk arriving to it or with an extended walk (only happens in active type). Then $dist[w][t'] = \infty$ and $(w,t')$ is added to the queue and $dist[w][t']$ set to $i$. If $(w,t')$ was reached before then $dist[w][t']<\infty$. If it was reached with a prior loop $j < i $ nothing happens. If it was reached exactly by a prior $(z,t'')$ in the same loop, then $(w,t')$ is not added to the queue and $dist[w][t']$ is set to $i$ since neither branches of condition in Line~\ref{line:cond} is met. Finally, if $(w,t')$ was reached by an extended $(w,t'')$ of the same loop $i$ then $dist[w][t']$ is set to $i$ and $pre[w][t'] = \{ \}$. Then the condition ($dist[b][t'] \geq \ell$ and $|\,pre[b][tp]\,| = 0$) is met and $(w,t')$ is added to the queue and its predecessor list will have size $>0$ ensuring $(w,t')$ is not added another time. In all cases $(w,t')$ is added exactly once. By definition all optimal exact $s-(w,t')$ walks have the same length and therefore all predecessor's of $(w,t')$ are added in the same loop. Finally all predecessors of $(w,t')$ arise in the same iteration $i$ and since $dist[w][t'] = i$ they are all added in the predecessor set. 
\end{proof}

\begin{proof}[Proof of Proposition~\ref{prop:com_pred}]
By using a queue each temporal node $(v,t) \in ER_s$ is scanned at most one time by \textsc{temporal\_bfs} of Algorithm~\ref{algo:BFS} as it was shown in Proposition~\ref{prop:BFS}. Then the same temporal arc $(v,w,t) \in \E$, can be relaxed up to $T$ times in Line~\ref{line:loop} of Algorithm~\ref{algo:BFS} and $T$ other times in Line~\ref{line:arc2} in function \textsc{relax} of Algorithm~\ref{algo:BFS}. Remember that each temporal nodes in $ER_s$ is added at most once to the queue. Thus the overall time complexity of Algorithm~\ref{algo:BFS}  is $O( m\, T +  n\,T)$.
\end{proof}

\begin{proof}[Proof of Corollary~\ref{cor:quan}]
For passive walks, $\overline{\sigma}_{s(v,t)}$ and hence ${\sigma}_{s(v,t)}$ can be computed recursively from the predecessor graph. Then $\sigma_{sz}$ and $\sigma_{sv}(v,t)$ can be computed using Equation~\eqref{eq:sigma_sz}. For active walks $\overline{\sigma}_{s(v,t)}$ can be computed recursively from the predecessor graph and then ${\sigma}_{s(v,t)}$ can be computed using Equation~\eqref{eq:sigma}. $\sigma_{sz}$ and $\sigma_{sv}(v,t)$ can be computed using Equation~\eqref{eq:sigma_sz} as in the passive case.
\end{proof}

\begin{definition}[arc dependency]
Fix a node $s$ and a type of walks. Then, $\delta_{sz}(v,t,(v,w,t')$ denotes the fraction of optimal $s-z$-walk
in $\W$ that go through the node appearance $(v,t)$ and then use the temporal arc $(v, w, t') \in \E$.
\end{definition}

\begin{lemma}\label{lem:suff} Let $G=(V,\E,T)$ be a temporal graph, fix a type of walks and a node $s \in V$. Let $G_s=(V_s,E_s)$ be the predecessor graph from $s$. Let $(v,t)$ be a temporal node and $(v,w,t') \in \mathcal{E}$. If $\delta_{sz}(v, t, (v, w, t' )) > 0$ , then
  \[\delta_{sz}(v,t,(v,w,t')) = 
    \dfrac{\sigma_{s(v,t)}}{\sigma_{s(w,t')}}
    \dfrac{\sigma_{sz}(w,t')}{\sigma_{sz}}.\]
\end{lemma}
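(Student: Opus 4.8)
The plan is to prove Lemma~\ref{lem:suff} by decomposing the fraction $\delta_{sz}(v,t,(v,w,t'))$ into two independent ratios, exploiting the fact (guaranteed by Lemma~\ref{lem:bij}) that optimal $s$-$z$ walks passing through $(v,t)$ and then using arc $(v,w,t')$ factor through the intermediate temporal node $(w,t')$. First I would fix the hypothesis $\delta_{sz}(v,t,(v,w,t')) > 0$, which ensures that $(v,t) \in pre_s(w,t')$ so the quantities below are well-defined and nonzero denominators appear only where they must. By definition, $\delta_{sz}(v,t,(v,w,t')) = \sigma_{sz}(v,t,(v,w,t'))/\sigma_{sz}$, where $\sigma_{sz}(v,t,(v,w,t'))$ counts optimal $s$-$z$ walks in $\W^{\star}$ that visit $(v,t)$ and immediately traverse $(v,w,t')$.

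The key step is a counting bijection. Any optimal $s$-$z$ walk $W$ counted in $\sigma_{sz}(v,t,(v,w,t'))$ splits uniquely as $W = W_1 \oplus W_2$, where $W_1$ is the prefix ending with the transition $(v,w,t')$ (hence an exact optimal $s$-$(w,t')$ walk) and $W_2$ is the suffix from $(w,t')$ to $z$. By optimality and the length-additivity used in Lemma~\ref{lem:bij}, $W_1$ must be a shortest $s$-$(w,t')$ walk whose last internal node is $(v,t)$, and $W_2$ must extend an optimal $s$-$(w,t')$ walk into an optimal $s$-$z$ walk. The number of choices for $W_1$ is exactly the count of exact optimal $s$-$(w,t')$ walks whose penultimate temporal node is $(v,t)$; since every such walk arises by appending arc $(v,w,t')$ to an optimal $s$-$(v,t)$ walk, this count equals $\sigma_{s(v,t)}$. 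The number of choices for the suffix part, relative to all optimal $s$-$z$ walks passing through $(w,t')$, is captured by $\sigma_{sz}(w,t')/\sigma_{s(w,t')}$, i.e.\ the fraction of $s$-$z$ optimal walks through $(w,t')$ multiplied by the appropriate normalization. Assembling these, $\sigma_{sz}(v,t,(v,w,t')) = \sigma_{s(v,t)} \cdot \bigl(\sigma_{sz}(w,t')/\sigma_{s(w,t')}\bigr)$, and dividing by $\sigma_{sz}$ yields the claimed product.

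The main obstacle I anticipate is the careful treatment of the two walk types, active and passive, under a single statement. For passive walks the factorization is clean because $\W_{s(w,t')}$ and $\overline{\W}_{s(w,t')}$ coincide, so a walk arriving at $(w,t')$ via $(v,w,t')$ is automatically exact and the prefix count matches $\sigma_{s(v,t)}$ directly. For active walks one must be careful that a walk ``through $(v,t)$'' may in fact arrive at $v$ strictly before time $t$ and merely extend across $t$; here I would invoke the relationship between $\sigma^{act}_{s(v,t)}$ and $\overline{\sigma}^{act}$ from Proposition~\ref{prop:sig}, and use the fact that the transition $(v,w,t')$ with $t' \geq t$ forces the relevant prefix to be counted by $\sigma_{s(v,t)}$ exactly. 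I would verify that in both cases the condition $\delta_{sz}(v,t,(v,w,t')) > 0$ guarantees $(v,t) \in V_s$ and $(w,t') \in V_s$, so that all denominators are strictly positive and the bijective counting argument is valid. The remaining verification — that distinct pairs $(W_1, W_2)$ give distinct walks $W$ and that every counted $W$ decomposes — is routine given Lemma~\ref{lem:bij} and the acyclicity from Lemma~\ref{lem:dag}.
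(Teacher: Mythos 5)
Your overall strategy coincides with the paper's (which itself defers the details to \cite{rymar2021towards}): decompose each optimal $s$-$z$ walk counted in $\sigma_{sz}(v,t,(v,w,t'))$ into a prefix ending with the arc $(v,w,t')$ and an optimal suffix from $(w,t')$ to $z$, count the prefixes by $\sigma_{s(v,t)}$ and the suffixes by $\sigma_{sz}(w,t')/\sigma_{s(w,t')}$. For \emph{passive} walks your argument is correct and is exactly the argument the paper invokes.

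The gaps are in the \emph{active} case. First, your claims that positivity of $\delta_{sz}(v,t,(v,w,t'))$ gives $(v,t)\in pre_s(w,t')$ (at the start) and $(v,t)\in V_s$ (at the end) are false for active walks: an optimal walk passing through $(v,t)$ may arrive at $v$ at some $t''<t$ and merely extend across $t$, in which case $(v,t'')\in pre_s(w,t')$ but $(v,t)$ need not lie on the predecessor graph at all --- this is precisely why the paper introduces $before_{G_s}(v,t)$ in Proposition~\ref{prop:rec_contri} and the appendix discussion. You notice this phenomenon yourself one sentence earlier (``a walk through $(v,t)$ may arrive strictly before $t$''), but you never reconcile it with the $V_s$ claim, and reconciling it is the actual content of the active case: one must show that the walks counted by $\sigma^{act}_{s(v,t)}$ via Proposition~\ref{prop:sig} are exactly the prefixes that can legally be continued by the arc $(v,w,t')$. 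For $k$-restless walks with finite $k$ this is the delicate point (an optimal walk arriving at $v$ too early may be unable to wait until $t'$), so asserting that the transition ``forces the relevant prefix to be counted by $\sigma_{s(v,t)}$ exactly'' restates the goal rather than proving it. Second, you omit the one active-case detail the paper's proof explicitly singles out: the interpretation of $\sigma_{sz}(w,t')/\sigma_{s(w,t')}$ as the number of optimal suffixes is valid only because walks in $\W^{\star}$ are extended to $W_T$ in Definition~\ref{def:opt_walk}; without this extension the identity fails when $z=w$, i.e.\ when the ``suffix'' consists of merely waiting at $w$. Your sketch never treats the case $z=w$.
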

\begin{proof}[Proof of Lemma~\ref{lem:suff}]
  For $\mathrm{passive}$ walks, only temporal nodes $(v,t) \in V_s$ can have strictly positive values of $\delta_{sz}(v, t, (v, w, t' ))$. The proof then corresponds to the one in
  \cite{rymar2021towards} by noticing that the fraction $\frac{\sigma_{sz}(w,t')}{\sigma_{s(w,t')}}$ corresponds to the number of optimal suffixes starting at $(w,t')$ and ending in $z$. For $\mathrm{active}$ walks the fraction $\frac{\sigma_{sz}(w,t')}{\sigma_{s(w,t')}}$ also corresponds to the number of optimal suffixes starting at $(w,t')$ and ending in $z$. However, if in Definition~\ref{def:opt_walk}, we would not have extended the optimal walks to $W_T$, this fraction would not correspond to the suffixes when $z = w$. Apart from this detail the rest of the proof also follows from the same reference.
\end{proof}

\begin{proof}[Proof of Proposition~\ref{prop:rec_contri}]
  The proofs closely follows the one in \cite{rymar2021towards} by using Lemma~\ref{lem:suff}. In the passive case the proof is the same. In the active case, we notice that $(v,t)$ might not belong to $V_s$. Therefore all shortest paths passing through $(v,t)$ will pass through the first $t''<t$ with $(v,t'') \in V_s$. Hence, the index of the sum looks at the successors of $(v,t'')$ in $G_s$ and only need to consider those successors $(w,t')$ with $t' \geq t$ so that these paths pass through $(v,t'')$ then $(v,t)$ and then go to $(w,t')$. The rest of the proof follows \cite{rymar2021towards}.
\end{proof}

\subsection{Discussion of general contribution for temporal nodes not lying on the predecessor graph}\label{sec:discussion}

For active walks, let $G_s = (V_s,E_s)$ be the predecessor graph from $s$. Let $t' = before_{G_s}(v,t)$. Then $\sigma_{s(v,t)} = \sigma_{s(v,t')}$. This result can be seen since we know that $t' \leq t$, and $(v,t') \in V_s$. If $t'= t$ the result is immediate. If $t'<t$, then $(v,t) \notin V_s$, and therefore there are no exact optimal $s-(v,t)$ walks. All the optimal $s-(v,t)$ walks arrive from $t'$. As a consequence:
\begin{equation}\label{eq:rec_act_inter}
      \delta^{act}_{s \bullet}(v,t) = \delta^{act}_{sv}(v,t) + \sum\limits_{\substack{ t'':= before_{G_s}(v,t)\\ (w,t') \in succ_{s}
      (v,t'') \\ t' \geq t}}\dfrac{\sigma^{act}_{s,(v,t'')}}{\sigma^{act}_{s,(w,t')}} \delta^{act}_{s
    \bullet}(w,t'),
  \end{equation}
This last Equation ensures that for active walks the computation $\delta^{act}_{s\bullet}(v,t)$ for temporal nodes $(v,t) \notin V_s$ can be done on the fly while computing $\delta^{act}_{s\bullet}(v,t')$ with $t' = before_{G_s}(v,t)$. This implies computing the elements $(w,t'')$ of the successor set of $(v,t)$ in a decreasing order of time and give the value $\delta^{act}_{s\bullet}(v,t)$ before the sum has completed since we need to stop when the elements $(w,t'')$ have $t''<t$.

\begin{algorithm}[H]
     \begin{algorithmic}[1]
 \renewcommand{\algorithmicrequire}{\textbf{Input:}}
    \renewcommand{\algorithmicensure}{\textbf{Output:}}
    \Require  $G = (V,\E,T)$ : a temporal graph, $G_s = (V_s,E_s)$ the predecessor graph of $s$, $del$ the values of $\delta{sv}(v,t)$ for all $(v,t)$ and $sig$ the values of $\sigma_{s(v,t)}$ for all $(v,t)$ 
    \Ensure A dictionary $cum$ containing the values of  $cum[(v,t)] = \delta_{s\bullet}(v,t), \forall v \in V, t \in [T]$
    \Function{Contributions}{$G,G_s,del, sig$}
    \State $cum(v,t) = 0, \forall v \in V, t \in [T]$
    \State $visited = \{\}$
    \For{$(v,t) \in sources(G_s)$} \Comment{sources are nodes with no incoming edges}
    \State $\textsc{general\_rec}((v,t),cum,G,G_s,del, sig,visited )$
    \EndFor
 \State \textbf{return} $cum$
    \EndFunction
  \end{algorithmic}
 	\begin{algorithmic}[1]
    \Function{General\_rec}{$(v,t),cum,G,G_s,del, sig,visited$}
    \If{$(v,t)$ not in visited}
    \State $su = 0$
    \For{$  t' \in \{t'' | \exists ((v,t), (w,t'')) \in E_s \}  $, in decreasing order} \Comment{Ordering is necessary only for active walks}
    \For{$ w \in \{((v,t),(w,t')) \in E_s  \}$}
    \State \textsc{General\_rec}{$((w,t'),cum,G,G_s,del, sig,visited)$}
    \State $su = su + \dfrac{sig[(v,t)] }{sig[(w,t')]} cum(w,t')$
    \EndFor
    \State $\textsc{inter\_contribution}((v,t), t', G_s, cum, del ,su,visited)$ \Comment{for passive walks ignore this instruction}
    \EndFor
    \State $cum[(v,t)] = su +del[(v,t)]$
    \State $visited.\textsc{add}((v,t))$
    \EndIf
    \EndFunction
        \Function{inter\_contribution}{$(v,t),t', G_s, cum,del, su,visited$} 
    \State $t'' = t' - 1$
        \While{$t'' \geq t$ and $before_{G_s}(v,t'') = t$ and $(v,t'') \notin visited$}
        \State $cum[(v,t'')] = del[(v,t'')] + su$
        \State $visited.\textsc{add}((v,t''))$
    \State $t'' = t''-1$ 
    \EndWhile
    \EndFunction
  \end{algorithmic}
  	\caption{\label{algo:general_contribution} Compute the values of $\delta_{s\bullet}(v,t)$ for a temporal graph $G$}
\end{algorithm}

\subsection{Passive shortest foremost and strict variants}\label{sec:shfm}
For passive shortest foremost variant. We can define:
\[ c^{fm,pas}_s(v,t) = \min\limits_{ W \in W_{s(v,t)}^{pas,\infty} } (arr(W)  n +   len(W)) \]
If the set of $s-v$ walks is empty then $ c^{fm,pas}_s(v,t) = \infty$. In this way the values of $c^{fm,pas}_s(v)$ as defined in Equation~\eqref{eq:cost3} coincide with $\min_{t \in [T]} c^{fm,pas}_s(v,t)$. We notice that the predecessor graph $G$ from node $s$ in the passive case of sh-$\infty$ is the same as the predecessor graph $G'$ of sh-fm in the passive case. This comes from the fact that all $s-(v,t)$ walks have the same arrival time $t$ (remember that we consider only passive type in the foremost setting). Therefore, $c^{fm,pas}_s(v,t) = c^{\infty,pas}_s(v,t) +t\cdot n$ for all temporal nodes $(v,t)$. However, $\sigma_{sv}$ and $\sigma_{sv}(v,t)$ will be different since $c^{fm}_s(v) \neq c^{sh-\infty}_s(v)$ in general. The same results and proofs then hold in the same way except the values of $B(v,t)$ become different. For strict version of all variants considered. The only change to be made in Algorithm~\ref{algo:BFS} is on Line~\ref{line:strict} by replacing $t'\geq t$ with $t'>t$. The extension of the recurrence in Proposition~\ref{prop:rec_contri} is immediate as it is the case in \cite{rymar2021towards}.

\section{Conclusion}\label{sec:conclu}
The main reason why our formalism does not hold on the active variant of shortest foremost walks is the lack of prefix optimality in this variant. For instance, on the graph of Figure~\ref{fig:ex}, $c_a(d,6) = 30 + 3 = 33$ and the walk $W = a \overset{1}{\rightarrow}  b
\overset{5}{\rightarrow} c \overset{6}{\rightarrow} d$ and $W \in \W_{a(d,6)}$, since $arr(W)\cdot n +len(W)= 33$. on the other hand $c_a(c,5) = 10 + 2 = 12$ while $W' = a \overset{1}{\rightarrow}  b \overset{5}{\rightarrow} c$ and $W' \notin \W_{a(c,5)}$ since $arr(W')\cdot n + len(W') = 27$ and therefore the predecessor graph does not account exactly for the set $\overline{\W}_{s(v,t)}$ (see Lemma~\ref{lem:bij}) as it is the case for the other two variants whether on active or passive walks.
Our results leave an open question on whether it is possible to characterize cost functions that can be solved using a temporal BFS as we did for the three variants of this paper in the same vein of \cite{rymar2021towards}.

Our results improve the theoretical time analysis of previously known methods to compute the temporal betweenness centrality on shortest paths variants. It would be interesting to know if these results could be improved or if it is not the case to extend known hardness complexity results on static betweenness centrality such that \cite{hab} to the temporal case.

Another direction is to look for guaranteed approximations to the temporal betweenness centrality which started to be studied recently in \cite{onbra,cruciani}. 

Finally, the temporal betweenness centrality has been defined in different time dependent formalisms such that Stream Graphs \cite{latapy2018stream,bet_link} that allow for continuous time and it would be interesting to find out if the same kind of results hold in that setting as well.


\bmhead{Acknowledgments}
The author would like to thank Matthieu Latapy and Clémence Magnien for their support and the many discussions we had around this subject. The author is also grateful to Maciej Rymar for taking time to explain their results on this subject and kind conversations.

\newpage

\end{document}